\newtheorem{theorem}{Theorem}[section]
\newtheorem{lemma}[theorem]{Lemma}
\newtheorem{corollary}[theorem]{Corollary}
\newtheorem{proposition}[theorem]{Proposition}
\title{\method: A Realistic Black-Box Node Injection Attack on LLM-Enhanced GNNs}
\author {
    Jiaji Ma\textsuperscript{\rm 1}\thanks{Corresponding Author},
    Puja Trivedi\textsuperscript{\rm 1},
    Danai Koutra\textsuperscript{\rm 1}
}
\begin{document}

\newcommand{\method}{\textsc{GraphTextack}}
\newcommand{\matX}{\mathbf{X}}

\maketitle

\begin{abstract}
Text-attributed graphs (TAGs), which combine structural and textual node information, are ubiquitous across many domains. Recent work integrates Large Language Models (LLMs) with Graph Neural Networks (GNNs) to jointly model semantics and structure, resulting in more general and expressive models that achieve state-of-the-art performance on TAG benchmarks.
However, this integration introduces dual vulnerabilities: GNNs are sensitive to structural perturbations, while LLM-derived features are vulnerable to prompt injection and adversarial phrasing. While existing adversarial attacks largely perturb structure or text independently, we find that uni-modal attacks cause only modest degradation in LLM-enhanced GNNs. Moreover, many existing attacks assume unrealistic capabilities, such as white-box access or direct modification of graph data.
To address these gaps, we propose \method, the first black-box, multi-modal{, poisoning} node injection attack for LLM-enhanced GNNs. \method{} injects nodes with carefully crafted structure and semantics to degrade model performance, operating under a realistic threat model without relying on model internals or surrogate models. To navigate the combinatorial, non-differentiable search space of connectivity and feature assignments, \method{} introduces a novel evolutionary optimization framework with a multi-objective fitness function that balances local prediction disruption and global graph influence. Extensive experiments on five datasets and two state-of-the-art LLM-enhanced GNN models show that \method{} significantly outperforms 12 strong baselines.
\end{abstract}

\section{Introduction}
Text-attributed graphs (TAGs), where nodes are associated with natural language text, are common in many real-world applications. Examples include citation networks, where each node represents a paper described by its abstract, and product co-purchase graphs, where products are annotated with textual descriptions. TAGs combine rich semantic information with relational structure, making them a powerful representation for a range of learning tasks.
Early graph learning methods relied on shallow node features, often ignoring the rich semantic and contextual information in textual attributes. Recent advances leverage Large Language Models (LLMs) to extract expressive node representations from raw text, which are then integrated with graph structure via Graph Neural Networks (GNNs). This combination enables models to capture complex relational and semantic patterns, leading to improved performance on downstream tasks involving TAGs~\cite{duan2023simteg,tang2024graphgpt,fatemi2023talk}.

Compared to directly applying LLMs to graph learning tasks (i.e., the LLM-as-predictor paradigm), combining LLM-derived embeddings with graph structure aggregation~\cite{kipf2016semi,hamilton2017inductive,wu2020comprehensive}, has achieved state-of-the-art performance on node classification benchmarks~\cite{chien2021node,duan2023simteg,he2023harnessing,liu2023one}.
These LLM-enhanced GNN approaches offer a powerful and general framework for learning on TAGs, effectively integrating language modeling with structural reasoning.

However, this integration also introduces new vulnerabilities {that are underexplored}. GNNs are known to be susceptible to adversarial perturbations in both structure and node features~\cite{zugner2018adversarial,ma2021graph}, with structural modifications often being particularly effective~\cite{Zhu22_HeterophilyRobustness}. At the same time, LLMs are vulnerable to prompt injection and adversarial phrasing~\cite{Li24_BackdoorLLM,Wei23_Jailbroken}. As a result, LLM-enhanced GNNs inherit vulnerabilities from both modalities (structural and semantic) making them sensitive to even small perturbations in either input space.

Yet, because these vulnerabilities are distributed across two interdependent modalities, they are difficult to fully exploit through uni-modal attacks that perturb structure or text independently. To motivate the need for stronger attack strategies, we briefly evaluate state-of-the-art uni-modal attacks, including structural perturbations~\cite{geisler21_PRBCD} and textual attacks~\cite{Li20_BertAttack,Ebrahimi18_HotFlip,Eger19_VIPER}. Our observations show that uni-modal attacks cause only \textit{modest} degradation, highlighting the importance of developing effective multi-modal adversarial approaches.

Existing adversarial attacks on LLM-enhanced GNNs, however, largely remain \emph{uni-modal}, perturbing either structure or textual features independently~\cite{guo2024learning}. Moreover, many existing attacks assume unrealistic capabilities, such as direct modification of existing nodes, edges, or textual content, or full white-box access to model internals. In practical deployment settings, however, existing graph data is often immutable, and deployed models expose only prediction outputs without revealing internal architectures or training dynamics. These constraints make such strong attack assumptions infeasible for real-world adversaries. 

This motivates a more realistic threat model: \emph{node injection attacks}, where adversaries introduce new nodes into the graph without modifying the existing graph elements. This better reflects common real-world scenarios where adversaries can create new user profiles, products, or documents, but cannot arbitrarily alter existing records. 
Such attacks are especially relevant in domains like e-commerce (where  fake products or reviews may be injected to manipulate recommendations) and citation networks, where low-quality or fraudulent publications can be added to influence scholarly metrics. 
In these settings,  attackers can realistically create new entities with realistic structure and textual content, {which are added to the data and can influence the model,} making node injection a practical adversarial attack.

\noindent \textbf{Our Approach.} 
To address existing limitations, we propose \method, the first black-box, multi-modal poisoning attack that injects nodes, designed specifically for LLM-enhanced GNNs. It jointly perturbs both structure and semantics without relying on model internals or surrogate models. 
To efficiently explore the combinatorial search space of injections, we adopt an evolutionary optimization framework~\cite{eiben2003introduction}. 
This approach is well-suited for our setting because the search space includes both structural and semantic changes, which makes standard optimization difficult. Moreover, gradient-based methods are not applicable: LLM-enhanced GNNs rely on LLM encoders that produce high-dimensional, non-differentiable embeddings. Surrogate models are also ineffective, as they cannot accurately approximate the LLM-based representations.

In contrast to standard evolutionary attacks that operate solely on structural perturbations under surrogate-assisted settings~\cite{li2023lapa,fang2024gani}, \method{} systematically adapts the evolutionary optimization framework to the multi-modal nature of the node injection problem, and jointly optimizes both structure and semantics. 
Our approach introduces a joint candidate encoding, multi-modal crossover and mutation operations, and a novel multi-objective fitness function that balances structural and semantic attack objectives, enabling more effective and realistic attacks under practical threat models.

We summarize our main contributions as follows:
\begin{itemize}
    \item \textbf{Realistic, Multi-modal Attack Model.} We introduce \method, a realistic, black-box, multi-modal node injection attack that jointly optimizes structure and semantic features to degrade LLM-enhanced GNNs.
    \item \textbf{Theoretical Analysis.} 
    We provide a search space and complexity analysis supporting the efficiency and scalability of our approach, offering theoretical justification.
    \item \textbf{Empirical Study.} We evaluate \method{} across 5 diverse node classification benchmarks and 2 representative LLM-enhanced GNN target models, demonstrating its superior effectiveness and efficiency compared to 12 state-of-the-art baselines. 
\end{itemize}

\section{Related Work}
\setlength{\tabcolsep}{1mm}
\begin{table}[t!]
\centering
{\fontsize{9}{10}\selectfont
\begin{tabular}{lccc}
\toprule
\textbf{Method} & \textbf{Black-box} & \textbf{Multi-modal} & \textbf{Node  Injection} \\
\midrule
BertAttack& \checkmark & \makecell{\textbf{x} (text)} & {\textbf{x} } \\
HotFlip& \checkmark & \makecell{\textbf{x} (text)} & {\textbf{x} } \\
VIPER&  \checkmark & \makecell{\textbf{x} (text)} & {\textbf{x} } \\
PRBCD& \textbf{x} (white-box) & \textbf{x} (edges only)& \textbf{x} \\
Nettack& \textbf{x} (surrog.)  & \textbf{x} (graph)  & \textbf{x}  \\
PSO& \checkmark & \textbf{x} (graph) & \checkmark \\
TDGIA & \checkmark & \textbf{x} (graph) & \checkmark \\
AFGSM & \textbf{x} (surrog.) & \textbf{x} (graph) & \checkmark \\
$G^2A2C$ & \checkmark & \textbf{x} (graph) & \checkmark \\
GANI& \textbf{x} (surrog.) & \textbf{x} (graph) & \checkmark \\
WTGIA& \textbf{x} (surrog.) & \checkmark & \checkmark \\
\midrule 
Ours
& \checkmark & \checkmark{} (graph + text) & \checkmark \\
\bottomrule
\end{tabular}
}
\caption{{Qualitative comparison with prior attacks.}}
\label{tab:attack_comparison}
\end{table}

Our work builds on several lines of research on graph adversarial attacks, node injection strategies, and robustness of LLM-enhanced GNNs. To contextualize our contributions, we now review the most relevant prior work and highlight how \method{} advances the state of the art  in the context of realistic, multi-modal, and black-box node injection attacks. {A qualitative comparison of prior attack methods and our approach is provided in Table \ref{tab:attack_comparison}}. 

\textbf{Graph Adversarial Attacks.} 
\textit{Structural modifications.}
Adversarial attacks on graphs have traditionally focused on perturbing node features or graph structure to degrade model performance~\cite{ma2021graph, wei2020adversarial}. 
Among these, structural attacks have received greater attention and generally prove more effective, due to  decreasing homophily, amplification effects through message passing, and the creation of inconsistent neighborhoods~\cite{Zhu22_HeterophilyRobustness, Zhu20_H2GCN}. 
Attack strategies range from greedy approximations~\cite{zugner2018adversarial}, to meta-learning~\cite{zugner2020adversarial}, and evolutionary search methods~\cite{li2023lapa}.
However, these approaches typically assume direct access to and control over the existing graph structure, an assumption often unrealistic in practice, where attackers must operate without the ability to modify existing nodes or edges.

\textit{Node injection attacks.}
To overcome this limitation, node injection attacks introduce new nodes connected to the existing graph without modifying existing data.
AFGSM~\cite{wang2020scalable_AFGSM} applies an approximate fast gradient sign method to generate targeted perturbations; G-NIA~\cite{tao2021single_GNIA} proposes a parametric poisoning strategy that preserves learned structural patterns; TDGIA~\cite{zou2021_tdgia} is a targeted evasion attack based on topological defective edge selection mechanisms; and GANI~\cite{fang2024gani} uses a genetic algorithm to optimize node injections via a surrogate model.
Unlike these methods, \method{} operates fully black-box and jointly optimizes both structural connectivity and textual feature assignments, without relying on surrogate models or white-box assumptions.

\textbf{Attacks on LLM-Enhanced GNNs.} The robustness of LLM-enhanced GNNs remains underexplored. Existing adversarial strategies for LLM-enhanced GNNs largely remain uni-modal, targeting either structural or textual perturbations independently. Structural perturbations exploit vulnerabilities in message passing, while textual perturbations specifically affect the LLM component and the resulting semantic embeddings.
Previous work~\cite{guo2024learning} study adversarial attacks on LLM-as-enhancer and LLM-as-predictor pipelines but primarily consider white-box \textit{modification} attacks, i.e., uni-modal attacks where the edges of the original graph are perturbed.
On the other hand, WTGIA~\cite{lei2024intruding_WTGIA} introduces a text-only node injection attack, without considering joint structural and semantic vulnerabilities. 

Natural language adversarial attacks such as BERT-Attack~\cite{Li20_BertAttack}, HotFlip~\cite{Ebrahimi18_HotFlip}, and VIPER~\cite{Eger19_VIPER} were originally developed to degrade LLM performance through minimal text perturbations~\cite{Sclar24_WorryingAboutFormatting, Mizrahi24_StateOfWhatArt}, and can be adapted to attack the text modality in LLM-enhanced GNNs.

Unlike prior work on uni-modal attacks, \method{} introduces the first black-box, global node injection attack that jointly optimizes structural and semantic modalities, demonstrating superior effectiveness over uni-modal baselines across multiple benchmark dataset in realistic threat settings. {Notably, it does not require model gradients, surrogates, or data modification. This makes \method{} uniquely suited to realistic attack scenarios. }

\section{Preliminaries}
\label{sec:preliminaries}
In this section, we introduce key concepts for our work.

\noindent \textbf{Graph Definitions.}
We consider a text-attributed graph, or TAG, $G = (V, E, T, Y)$, where $V$ denotes the set of nodes, $E$ the set of edges, $T: V \to \mathcal{T}$ a mapping from each node to an associated textual attribute, and $Y \in \mathcal{Y}^{|V|}$ the node labels. 
Each model defines an encoding function $\phi: \mathcal{T} \to \mathbb{R}^d$ that processes the raw text into node feature embeddings. Thus, the node feature matrix is $\matX = [\phi(T(v))]_{v \in V} \in \mathbb{R}^{|V| \times d},$
where $\phi$ may correspond to frozen LLM encoders, task-specific fine-tuned models, or other text representation methods depending on the application.

\noindent \textbf{LLM-Enhanced GNNs.}
Recent approaches have integrated LLMs into GNNs to improve representation learning on TAGs~\cite{ye2023natural, tang2024graphgpt, liu2023one}.
In LLM-enhanced GNNs, textual attributes $T(v)$ are first processed through an LLM encoder to generate rich semantic embeddings, which are then aggregated using graph message-passing architectures. 
This design enables the model to leverage both semantic and structural information for downstream tasks. Empirical studies show that they often achieve accuracy surpassing traditional GNNs operating on shallow embeddings~\cite{liu2023one}.

\noindent \textbf{Adversarial Attack Settings.}
Uni-modal graph adversarial attacks seek to create \enquote{imperceptible} perturbations that deteriorate model performance. 
These attacks can be categorized as \textit{evasion} attacks, which perturb the graph data at test time, and \textit{poisoning} attacks, which perturb the graph data during training. Attackers may target specific nodes (\textit{targeted} attacks), or aim to reduce overall model performance (\textit{global} attacks). 
In the following sections, we consider several adversarial attack scenarios targeting LLM-enhanced GNNs: (1) \textbf{Textual perturbation attacks}: adversarial modifications to the raw text $T(v)$ associated with existing nodes, without altering the graph structure;
(2) \textbf{Structural perturbation attacks}: adversarial modifications to the edge set $E$ (e.g., adding or removing edges between existing nodes), without changing node features; 
(3) \textbf{Node injection attacks}: injection of new nodes $V'$ with associated edges $E'$ and features derived from text $T'(v')$.

In textual and structural attacks, adversaries directly modify $T$ or $E$, resulting in perturbed graphs $G_{\text{text}} = (V, E, T^{\text{adv}}, Y)$ or $G_{\text{struct}} = (V, E^{\text{adv}}, T, Y)$, where $T^{\text{adv}}$ and $E^{\text{adv}}$ denote adversarially modified text and edge sets.

In the node injection setting, the attacker constructs an augmented graph 
$G' = (V \cup V', E \cup E', T \cup T', Y \cup Y'),$
where $T'$ and $Y'$ denote the text and labels associated with the injected nodes, respectively. The goal is to inject nodes such that the model's predictions are significantly disrupted, without modifying existing nodes or edges. 
Throughout the paper, $G$ refers to the clean graph, $G'$ to the graph after node injection, and $f$ to the target model under attack.

\section{{Adversarial Setting and Threat Model}}
\label{sec:method}
LLM-enhanced GNNs leverage language models to enrich node representations with semantic information extracted from textual descriptions. While this added expressivity improves performance, it also introduces new vulnerabilities: perturbations to LLM-derived features or their interactions with the graph structure can significantly influence neighborhood aggregation and classification outcomes. 

\noindent \textbf{Motivation: Limitations of uni-modal attacks.} Recent work has begun to study adversarial attacks against TAGs in LLM-GNN joint models~\cite{lei2024intruding_WTGIA, guo2024learning}, typically treating structure and text as disjoint modalities. As a result, such studies evaluate structural attacks and textual attacks separately. However, structure and semantic features are inherently entangled, and real-world adversaries may exploit both modalities simultaneously.

To better understand these vulnerabilities, we first empirically evaluate the robustness of LLM-enhanced GNNs to uni-modal poisoning attacks, including textual perturbations (BERT-Attack~\cite{Li20_BertAttack}, 
HotFlip~\cite{Ebrahimi18_HotFlip}, VIPER~\cite{Eger19_VIPER}) and structural perturbations (random edge modifications, PRBCD~\cite{zugner2018adversarial}). 
We discuss the detailed empirical setup in Sec.~\ref{sec:experiments} and the detailed results in App.~\ref{app:unimodal-results}, and only give the key observations that motivate our approach here. 
Uni-modal attacks result in \textit{only modest} performance degradation: textual attacks typically reduce accuracy by less than 5\%, and structural attacks require randomly altering over 10\% of edges or white-box access to achieve comparable effects. 
These findings highlight the need for stronger, more realistic attacks---particularly ones that simultaneously manipulate structure and semantic features, without modification of existing graphs. 

\noindent \textbf{Problem Setup.} Unlike prior uni-modal perturbations, we focus on a more restricted and practical threat model: injecting new nodes without altering existing nodes or edges. 
Under our \textbf{threat model}, the attacker can inject a set of adversarial nodes $V'$, with associated edges $E'$ and textual descriptions $T'$, leading to an augmented graph $G' = (V \cup V', E \cup E', T \cup T', Y \cup Y')$. 

We study a \textit{poisoning attack} setting, where adversarial nodes are injected prior to model training to influence the learning process. The attacker seeks to strategically design $(V', E', T')$ to maximally degrade the model's predictive performance on the unlabeled nodes, while operating under realistic constraints.
Specifically, we consider a \textit{black-box attack} setting: the attacker has no access to the model's parameters, gradients, or internal architecture, and can only observe model predictions through queries. 
{In practice, these queries are lightweight inferences, and results can be cached to avoid redundant calls. }
This mirrors real-world adversarial scenarios where deployed models are typically opaque beyond input-output access. 
We focus on \emph{global attacks} aimed at degrading the model's overall performance across the graph, rather than targeting specific nodes individually.

\section{{\method: Our Attack Framework}}
\label{sec:theory}
To effectively search the space of possible node injections under the black-box threat model defined in Sec.~\ref{sec:method}, we propose \method, a black-box evolutionary optimization framework designed for node injection poisoning attacks on text-attributed graphs. 

\textbf{Motivation for an evolutionary attack.} Optimizing node injections under a black-box model is uniquely challenging: the search space is high-dimensional, discrete, combinatorial, and non-differentiable. Traditional black-box optimization techniques, such as Bayesian optimization, struggle due to reliance on surrogate models. In contrast, evolutionary algorithms (EAs) offer a pragmatic, assumption-light approach, requiring only model queries and can explore complex solution spaces without gradient information.

While EAs have been applied in adversarial machine learning, prior work primarily targeted structural perturbations under white-box or surrogate-assisted settings. \method{} addresses a significantly harder problem: realistic, global node injection attacks on LLM-enhanced GNNs, where injected nodes influence both graph structure and LLM-derived features. By jointly evolving structural connections and feature assignments, \method{} effectively navigates the combinatorial complexity introduced by multi-modal vulnerabilities, outperforming prior approaches constrained to single-modality attacks.

\begin{algorithm}[!htb]
\caption{\method}
\label{alg:graphtextack}
\KwIn{Graph $G = (V, E, T, Y)$, model $f$, injection budget $r$}
\KwOut{Poisoned graph $G' = (V \cup V', E \cup E', T', Y')$}

\For{$t = 1$ to $r$}{
    Initialize population $\mathcal{P}_0 = \{s_i\}_{i=1}^{N_p}$\;
    \ForEach{candidate $s_i \in \mathcal{P}_0$}{
        Randomly initialize structural connections and feature generation strategy\;
    }
    \For{$g = 1$ to $T_{\text{gen}}$}{
        \ForEach{candidate $s_i \in \mathcal{P}_g$}{
            $\text{Fitness}(s_i) = \alpha \cdot \Delta_{\text{conf}}(s_i) + \beta \cdot \text{PR}(s_i)$\;
        }
        Rank candidates by fitness scores\;
        Select top $N_e$ elite candidates to form $\mathcal{P}^{\text{elite}}_g$\;
        
        \ForEach{pair $(s_1, s_2) \in \mathcal{P}^{\text{elite}}_g$ with probability $p_{\text{crossover}}$}{
            Create $s_{\text{new}}$ by combining connections and feature strategies from $s_1$ and $s_2$\;
            Add $s_{\text{new}}$ to $\mathcal{P}_{g+1}$\;
        }
        
        \ForEach{candidate $s_i \in \mathcal{P}_{g+1}$ with probability $p_{\text{mutate}}$}{
            Randomly modify edge connections and/or feature assignment\;
        }
    }
    
    Select best candidate $s^*$ from final generation $\mathcal{P}_{T_{\text{gen}}}$\;
    Inject nodes and edges from $s^*$ into graph; 
}
\Return $G'$
\end{algorithm}

As outlined in Algorithm~\ref{alg:graphtextack}, \method{} operates iteratively over multiple generations. It first  initializes a population of candidate injection strategies and explores the combinatorial search space through iterative selection, crossover, mutation, and fitness evaluation. It identifies an optimal injection strategy via this adaptive evolutionary process. 

The core challenge of selecting effective edge connections and feature representations is a discrete, high-dimensional combinatorial problem. Gradient-based optimization methods struggle in this setting, particularly under a black-box threat model where surrogate gradients may poorly approximate the target model’s decision boundaries. Exhaustive search is infeasible due to the exponential number of perturbation possibilities. Furthermore, each injection alters the graph topology, influencing downstream classification and necessitating adaptive re-optimization. 

To address these challenges, \method{} maintains a population of candidate solutions, each encoding a potential injection strategy, and iteratively refines them through evolutionary operators guided by our novel multi-objective fitness function.  
Our attack has four key components.

\noindent\textbf{(C1) Candidate Representation and Initialization.} 
We represent the population at generation $t$ as a set of $N_p$ candidate node injections $\mathcal{P}_t = \{ s_i \}_{i=1}^{N_p}$, where each candidate $s_i$ encodes: (1) a set of injected nodes $V'_i$ and their connections to existing nodes $E'_i \subseteq V' \times V$; and (2) a feature generation strategy specifying how node features $T'(v')$ are sampled for each injected node $v' \in V'_i$.

We randomly sample the initial population $\mathcal{P}_0$ by selecting edge connections and feature strategies according to uniform distributions, ensuring diverse exploration of the search space. 
The number of edges assigned to each injected node (budget) is sampled from the empirical degree distribution of nodes in $G$, ensuring that injected nodes exhibit realistic connectivity patterns and remain less detectable.

\noindent \textbf{(C2) Semantic Feature Generation.} 
Each injected node must also be assigned a textual description consistent with the graph's semantic space. Rather than directly optimizing feature embeddings, which would involve a complex, continuous search space (and risk generating unrealistic node attributes that do not fit in the feature space), we adopt a class-conditioned sampling strategy.

Each candidate $s_i$ specifies a class label $c \in \mathcal{Y}$ for its injected nodes. Features are then sampled from the empirical distribution of nodes belonging to class $c$. Formally, for each injected node $v'$, its feature embedding is sampled as 
\(X'(v') \sim p(X(v) \mid Y(v) = c)\), 
where $p(\cdot)$ denotes the empirical distribution estimated from labeled nodes $V_L$, which can be constructed by querying the target model to obtain pseudo labels. 
The class label assignment is part of the candidate's representation and is subject to evolutionary optimization during the search process.

Unlike prior attacks that perturb the structure or inject nodes with handcrafted features~\cite{wang2020scalable_AFGSM,zou2021_tdgia}, \method{} introduces class-conditioned semantic feature generation, this enables injected nodes to blend into the graph’s textual feature space in a way that preserves semantic consistency and improves imperceptibility, preventing injected nodes from appearing anomalous.

\noindent \textbf{(C3) Multi-modal Fitness Evaluation.} 
Since the effects of poisoning attacks cannot be directly measured, we design the fitness function as a multi-modal multi-objective optimization problem, balancing semantic-level disruption (measured by local prediction shifts) and structural-level influence (measured by global graph centrality).  

Formally, fitness is defined: \(\text{Fitness}(s_i) = \alpha \cdot \Delta_{\text{conf}}(s_i) + \beta \cdot \text{PR}(s_i)\), where $\Delta_{\text{conf}}(s_i)$ measures the prediction shift induced by an injected node, $\text{PR}(s_i)$ measures its structural influence, and $\alpha, \beta \geq 0$ balance the two components. 

 $\bullet$ \textit{Local prediction shift ($\Delta_{\text{conf}}$).}     
For target model $f$ that outputs SoftMax probabilities over class labels, let $C_v = \max(f_v(G))$ denote node's $v \in V$ maximum confidence score. After injecting node \(v'\) according to $s_i$, the new graph is $G_i'$ and the updated confidence is $C_v' = \max(f_v(G_i'))$. 
Thus, 
we define the local prediction shift as:
\(\Delta_{\text{conf}}(s_i) = \frac{1}{|\mathcal{N}_2(v')|} \sum_{v \in \mathcal{N}_2(v')} \left| C_v - C_v' \right|\),
where $\mathcal{N}_2(v')$ is the two-hop neighborhood of node $v'$. 

$\bullet$ \textit{Global, PageRank influence ($\text{PR}$).}  To capture the global influence of the injected node, we compute its PageRank score after injection into graph $G_i'$. The influence term is then defined as: \(\text{PR}(s_i) = \text{PR}(v')\), where \(v'\) is the node injected according to candidate \(s_i\). Nodes with higher centrality scores are likely to affect more nodes via message passing, so high PageRank indicates greater potential influence. 

This multi-modal fitness function ensures that the attack effectiveness is evaluated across both feature-based and topology-based vulnerabilities. Moreover, we combine prediction shift and PageRank influence to jointly optimize both localized adversarial impact and global structural importance. This ensures that the evolutionary search process favors candidates that not only perturb model predictions in their local neighborhoods but are also central in the graph.

Unlike prior adversarial attacks on graphs that primarily focus on structural modifications or localized prediction changes in isolation, \method{} jointly optimizes semantic perturbations and structural influence. 

\noindent \textbf{(C4) Multi-modal Evolutionary Operations.}  
At each generation, we apply the following evolutionary operations to refine the candidate population.

\textbf{Selection}: Candidate injection strategies are ranked based on fitness scores, and the top $N_e$ elite individuals $\mathcal{P}^{\text{elite}}_t \subseteq \mathcal{P}_t$ are retained to form the basis of the next generation.

\textbf{Crossover}: We select pairs of elite candidate injection strategies, and combine them to produce new candidates. 
Given parent candidates $s_1$ and $s_2$, a new candidate $s_{\text{new}}$ is created by splitting and recombining these components. 
Specifically, for the edge connections, a random crossover point $j$ is selected, and $s_{\text{new}}$ inherits the first $j$ connections from $s_1$ and the remaining connections from $s_2$:
\(E'_{\text{new}} = E'_{s_1}[:j] \Vert E'_{s_2}[j:]\). And the class sampling strategy is randomly chosen from one of the two parents. 
This \textit{multi-modal crossover} enables the evolutionary process to explore interactions between graph structure and node semantics, which is critical for effective attacks on LLM-enhanced GNNs.

\textbf{Mutation}: To maintain population diversity, each candidate is mutated with probability $p_{\text{mut}}$ by randomly altering edge connections or feature assignments.

\subsection{Search Space and Computational Complexity for Multi-modal Injection}
\label{sec:complexity}

\begin{proposition}[Search Space of Multi-Modal Injection]
Optimizing multi-modal node injection attacks in $G = (V, E)$ exactly requires searching over a space of size
$O\left(|V|^{r \cdot d_{\max}} \times |\mathcal{F}|^r\right)$,
which is exponentially large in the number of injected nodes \(r\), the maximum degree per injected node $d_{\max}$, and the feature set size \(|\mathcal{F}|\). 
\end{proposition}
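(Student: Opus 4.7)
The plan is to establish the bound via a direct combinatorial counting argument over the joint structural--semantic configuration space, decomposing each candidate injection into two independent choices per injected node: its connectivity pattern to the existing graph and its feature assignment drawn from $\mathcal{F}$. Since the proposition gives an $O(\cdot)$ upper bound, I will count the number of distinct candidates under the most permissive reading of the injection protocol (ordered neighbor choices with replacement) and argue that any exact optimizer must, in the worst case, be able to distinguish among all these configurations.

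First, I would fix a single injected node $v'$ and enumerate its possible structural configurations. Under a per-node degree cap of $d_{\max}$, each neighbor of $v'$ is chosen from $V$, giving at most $|V|^{d_{\max}}$ ordered connectivity patterns (this dominates the $\binom{|V|}{\le d_{\max}}$ count for the unordered case and hence suffices for the $O$-bound). Next, I would count the feature-assignment choices: by construction in component (C2), each injected node is assigned a feature drawn from a feature set $\mathcal{F}$ (e.g., the class-conditioned empirical distribution or a discretized embedding space), giving $|\mathcal{F}|$ options. Independence of the two choices for a single node yields $|V|^{d_{\max}} \cdot |\mathcal{F}|$ configurations per injected node.

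Second, I would lift this to all $r$ injected nodes. Since each injected node's structural and semantic choices are made independently of the others (the candidate encoding in (C1) represents injections as tuples without global coupling constraints), the total count multiplies, giving
\[
\left(|V|^{d_{\max}} \cdot |\mathcal{F}|\right)^{r} \;=\; |V|^{r \cdot d_{\max}} \cdot |\mathcal{F}|^{r},
\]
which matches the stated $O\left(|V|^{r \cdot d_{\max}} \times |\mathcal{F}|^r\right)$ bound. To conclude, I would observe that any algorithm guaranteed to return an exact optimum of the fitness function must, in the worst case, be able to differentiate among all such configurations, since the fitness (combining $\Delta_{\text{conf}}$ and $\text{PR}$) can in principle take distinct values on any two distinct candidates.

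The main obstacle, though minor, is being precise about what counts as a distinct candidate: whether neighbor multisets are ordered, whether self-loops or repeated neighbors are allowed, and whether feature assignments are drawn from a discrete $\mathcal{F}$ or a continuous embedding space that must first be discretized. I would handle this by stating an explicit modeling assumption (ordered neighbor selection with replacement from $V$, and $\mathcal{F}$ as the effective discrete feature alphabet used by the attacker), under which the count is an exact product and the $O$-notation absorbs any slack from symmetries. This keeps the argument clean while preserving the exponential dependence on $r$, $d_{\max}$, and $|\mathcal{F}|$ that motivates the evolutionary search strategy.
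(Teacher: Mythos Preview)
Your proposal is correct and follows essentially the same combinatorial decomposition as the paper: count structural choices per injected node, count feature choices per injected node, and take the product over $r$ nodes. The only cosmetic difference is that the paper counts unordered neighbor sets via $\binom{|V|}{d_{\max}}^r$ and then observes this is $O(|V|^{r\cdot d_{\max}})$ when $d_{\max}\ll |V|$, whereas you count ordered selections with replacement to get $|V|^{d_{\max}}$ per node directly; both land on the same asymptotic bound, and your explicit worst-case remark about distinguishing configurations is a welcome addition the paper leaves implicit.
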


The large search space motivates the evolutionary optimization design of \method, which maintains population diversity and uses mutation and crossover to explore the search space efficiently without exhaustive enumeration. 

\begin{lemma}[Polynomial-Time Evolutionary Approximation]
\label{lem:complexity}
Under fixed population size \(N_p\) and generation budget \(T_{\text{gen}}\), the runtime complexity of \textnormal{\textsc{GraphTextack}} per injection step is 
\(O\left(N_p \cdot T_{\text{gen}} \cdot (r \cdot d_{\max}^2 + |E|)\right)\),
where $r$ is the number of injected nodes, $d_{\max}$ is maximum degree per injected node, and $|E|$ is the original number of edges.
\end{lemma}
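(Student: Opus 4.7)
The plan is to bound the cost of one full generation of \method{} (a single pass through fitness evaluation, selection, crossover, and mutation over the $N_p$-sized population) and then multiply by $T_{\text{gen}}$, since the per-injection-step loop in Algorithm~\ref{alg:graphtextack} consists of exactly $T_{\text{gen}}$ such generations followed by an $O(N_p)$ elite selection and an $O(r \cdot d_{\max})$ graph update. First I would fix a single candidate $s_i$ and bound the work required to evaluate $\text{Fitness}(s_i) = \alpha \cdot \Delta_{\text{conf}}(s_i) + \beta \cdot \text{PR}(s_i)$, since this dominates every other per-candidate operation.

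For $\Delta_{\text{conf}}(s_i)$, I would argue that the candidate injects $r$ nodes, each with at most $d_{\max}$ edges, so $|\mathcal{N}_2(v')| \le d_{\max}^2$ for each injected $v'$, giving a two-hop footprint of size at most $r \cdot d_{\max}^2$. The model query $f_v(G_i')$ restricted to these target nodes, under the standard message-passing assumption of constant-cost per edge in the local subgraph, is $O(r \cdot d_{\max}^2)$; caching the clean-graph confidences $C_v$ once outside the generation loop means this does not recur. For $\text{PR}(s_i)$, I would invoke the standard power-iteration bound: each iteration costs $O(|E| + r \cdot d_{\max})$ on the augmented graph, and a constant number of iterations (or a fixed tolerance $\epsilon$) suffices for the approximate PageRank value at the single injected node, giving $O(|E|)$. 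Summing yields a per-candidate cost of $O(r \cdot d_{\max}^2 + |E|)$.

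Next I would account for the evolutionary operators. Selection is $O(N_p \log N_p)$ for ranking by fitness, which is dominated. Crossover on an elite pair concatenates edge lists of length at most $r \cdot d_{\max}$ and copies a single class label, costing $O(r \cdot d_{\max})$; producing up to $N_p$ offspring is $O(N_p \cdot r \cdot d_{\max})$. Mutation similarly rewrites $O(r \cdot d_{\max})$ entries per candidate. All of these are absorbed into the $O(N_p \cdot (r \cdot d_{\max}^2 + |E|))$ bound of the fitness stage. Multiplying by $T_{\text{gen}}$ generations gives the claimed $O(N_p \cdot T_{\text{gen}} \cdot (r \cdot d_{\max}^2 + |E|))$ per injection step.

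The main obstacle I expect is the PageRank step: a worst-case exact computation is not linear, so I would need to be explicit that $\text{PR}(s_i)$ is approximated to constant precision via a constant number of power iterations (or via personalized PageRank restricted to the injected node), and that the query cost of $f$ on the two-hop neighborhood is $O(r \cdot d_{\max}^2)$ under constant-layer message passing. I would state these as standing assumptions at the start of the proof so the bookkeeping in the remaining steps becomes straightforward arithmetic.
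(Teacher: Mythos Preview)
Your proposal is correct and follows essentially the same decomposition as the paper's own proof sketch: bound the per-candidate fitness cost as $O(r\cdot d_{\max}^2)$ for the two-hop prediction-shift query plus $O(|E|)$ for an approximate PageRank, note that the evolutionary operators are dominated, and multiply by $N_p\cdot T_{\text{gen}}$. You are in fact more explicit than the paper about the caching of clean confidences, the $O(N_p \log N_p)$ sort, and the need to treat PageRank as a constant-iteration approximation---all of which the paper either elides or states without justification.
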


In practice, $d_{\max}$ is chosen to be small to ensure stealthy attacks. Since $N_p$ and $T_{\text{gen}}$ are user-controlled hyperparameters, the overall runtime of \method{} scales effectively linearly in the graph size \((|V|, |E|)\), and so it offers a scalable and tunable trade-off between search space and runtime. We empirically validate the efficiency of our attack in the next section. {In addition to the above complexity analysis, we formally characterize the synergy between structure and feature perturbations in multi-modal attacks, provide theoretical analysis of the advantages of multi-modal node injections, and analyze how local prediction shifts arise with node injection perturbations, in App. \ref{app:proofs}. }

\section{Experiments}
\label{sec:experiments}
We now present an extensive empirical evaluation of GraphTextack. Our analysis aims to answer the following key research questions:   
\textbf{(RQ1)}~Can our \method{} framework effectively degrade the node classification performance of LLM-enhanced GNN models through node injection?  How does our surrogate-free framework  in a black-box setting compare to strong baselines? 
\textbf{(RQ2)}~How does \method{} compare to baselines in terms of efficiency? 
\textbf{(RQ3)}~How do different components of our  framework contribute to the overall effectiveness?

\subsection{Empirical Setup} 
\label{sec: Empirical_setup}

\textbf{Datasets.} We conduct experiments on five widely used text-attributed graph benchmark datasets for node classification: Cora\cite{mccallum2000automating}, PubMed \cite{sen2008collective_cora_pubmed}, WikiCS \cite{mernyei2022wikiCS}, ogbn-arxiv\cite{hu2020open}, and ogbn-products \cite{hu2020open}.  
For ogbn-products, we use the sampled subset from~\cite{he2023harnessing}. For all datasets, we use the default splits, and obtain the raw textual attributes from \cite{liu2023one} and \cite{he2023harnessing}. We summarize the dataset statistics in App. \ref{app:results}. 

\textbf{Target Models.} We evaluate two representative paradigms of LLM-enhanced GNNs: 
(1)~\textbf{representation-level enhancer}: One-for-all \cite{liu2023one} is a universal graph representation learning framework that incorporates LLM-based text encodings to create a task-agnostic graph representation; 
(2)~\textbf{node-level enhancer}: In this setup, we directly perform node feature enhancement by encoding node attributes using the pretrained e5-large-v2 model~\cite{wang2024text}, and pass the resulting embeddings into a GCN model~\cite{luoclassic}. 

\textbf{Attack Baselines \& Hyperparameter Tuning.}
Since black-box node injection attack is an emerging area, there are few existing methods directly applicable for comparison. We compare our attack against 12 representative global attack methods and adaptations of general graph attacks. Specifically, we consider three textual attacks: (1)~BERT-Attack~\cite{Li20_BertAttack}, (2)~HotFlip~\cite{Ebrahimi18_HotFlip}, and (3)~VIPER~\cite{Eger19_VIPER}; 
(4) a gradient-based structural modification attack, PRBCD~\cite{geisler21_PRBCD}; 
(5) a Preferential attack~\cite{barabasi1999emergence}, a heuristics-based approach; 
popular attacks that we adapted to our node injection setup: 
(6) Nettack~\cite{zugner2018adversarial}: widely used targeted modification attack;  
(7) PSO~\cite{Zang_2020_pso}, a population-based targeted attack;
(8) TDGIA~\cite{zou2021_tdgia}, a targeted evasion attack;
(9) AFGSM~\cite{wang2020scalable_AFGSM}, a targeted poisoning attack;  
(10) $G^2A2C$~\cite{ju2023let_g2a2c}, an evasion injection attack based on direct model queries;  
(11) GANI~\cite{fang2024gani}, a global poisoning injection attack; and
(12) WTGIA~\cite{lei2024intruding_WTGIA}, a text-level node injection attack.  

{For our approach, we tune the population size, crossover and mutation probabilities, and the coefficients $\alpha$ and $\beta$ (which control the balance between local prediction disruption and global influence) through hyperparameter search. }
{We give more details in App. ~\ref{app:unimodal-results} and ~\ref{app:baselines}. }

\textbf{Evaluation Metrics.} 
For different node injection budgets, we report the average node classification accuracy after attack and stdev over five runs and attack generation runtime (wall-clock time). More effective methods lead to bigger  accuracy drop relative to the clean graph. 

\subsection{(\textbf{RQ1}) Overall Attack Performance} 
\label{sec:overall_performance}
First, we seek to assess the overall performance of \method{} by evaluating the classification accuracy drop across multiple datasets and target models. 
We vary the node injection budget $r$ in {$r$ = \{0.01, 0.03, 0.05\}}, intentionally keeping it low to ensure imperceptibility and avoid distribution shifts that would make the attack more detectable by defense mechanisms.
We give the post-attack classification accuracy results for the representation-level enhancer in Table \ref{tab:attack_results_ofa}, and for the feature-level enhancer in App.~\ref{app:results}. Among the three textual attacks, which modify \emph{all} the textual features, we provide the best-performing result per dataset ({`Best text*'}); since the budget does not apply in this case,  we report a single number. In the results table, {the best performance (i.e., the lowest accuracy) is bolded and the second best underlined.}
We discuss our key observations below. 

\setlength{\tabcolsep}{1mm}
\begin{table}[!htbp]
\centering
{\fontsize{9}{10}\selectfont
\begin{tabular}{llcccccc}
\toprule
\multirow{2}{*}{\makecell{\textbf{Dataset}\\(Clean Acc)}}
& \multirow{2}{*}{\textbf{Methods}}
& \multicolumn{3}{c}{\textbf{Classification Accuracy}} \\ \cmidrule{3-5}
& &  \textbf{r=0.01} & \textbf{r=0.03} & \textbf{r=0.05} \\
\midrule
\multirow{11}{*}{\makecell{\textbf{Cora} \\ (80.95 \\ {±0.87)}}} & Pref.
&77.34{±0.58}
&73.49{±0.52}
&72.39{±0.96}\\
& Best text* &  --------------- & 74.36{±1.68} & --------------- \\
& PRBCD
&79.33{±2.08}
&77.37{±1.03}
&74.75{±0.54}\\
& Nettack 
&76.83{±0.27}
&68.47{±0.20}
&\underline{62.59{±0.30}}\\
& PSO 
&77.83{±0.60}
&73.14{±0.52}
&68.07{±0.39}\\
& TDGIA 
&75.79{±0.12}
&71.26{±0.26}
&66.26{±0.49}\\
& AFGSM 
&\underline{75.47{±0.37}}
&69.75{±0.75}
&63.89{±0.22}\\
& $G^{2}A2C$ 
&77.20{±0.82}
&\underline{67.79{±0.67}}
&63.74{±0.69}\\
& GANI 
&77.18{±0.77}
&70.98{±0.96}
&66.06{±0.70}\\
& WTGIA 
&76.35{±0.38}
&69.63{±0.43}
&65.80{±0.36}\\
& Ours &\textbf{73.99}{±0.78}
&\textbf{65.75}{±0.81}
&\textbf{62.02}{±0.64}\\
\midrule
\multirow{11}{*}{\makecell{\textbf{PubMed}\\(71.65 \\ ±0.77)}}
& Pref.
&67.48{±0.57}
&59.34{±0.60}
&50.97{±0.85}\\

& Best text*  & --------------- & {71.78{±0.69}} & ---------------\\
& PRBCD 
&71.13{±0.18}
&68.60{±1.03}
&65.04{±0.90}\\

& Nettack
&64.28{±0.53}
&51.00{±0.45}
&45.14{±0.27}\\
& PSO
&65.62{±0.35}
&49.61{±0.38}
&43.58{±0.71}\\
& TDGIA
&64.70{±0.56}
&51.33{±0.39}
&43.37{±0.43}\\
& AFGSM
&62.96{±0.59}
&53.98{±0.77}
&44.11{±0.24}\\
& $G^{2}A2C$
&\underline{62.10{±0.41}}
&54.96{±0.24}
&43.78{±0.76}\\
& GANI
& 64.33{±0.74}
& 54.59{±1.23}
& 45.61{±0.48}\\
& WTGIA
&65.89{±0.96}
&\underline{48.96{±0.46}}
&\underline{43.10{±0.74}}\\
& Ours
&\textbf{60.78}{±0.25}
&\textbf{48.43}{±0.63}
&\textbf{42.05}{±0.68}\\
\midrule
\multirow{11}{*}{\makecell{\textbf{WikiCS}\\(76.31\\±0.94)}}
& Pref.
& 73.33{±1.66}
& 72.02{±0.54}
& 69.95{±0.61}\\

& Best text*  & ---------------  & {74.34{±0.21}} & --------------- \\

& PRBCD 
&72.96{±1.04}
&70.13{±0.52}
&67.38{±0.51}\\
& Nettack
&72.92{±0.70}
&65.29{±0.92}
&62.68{±0.67}\\
& PSO
&73.89{±1.67}
&67.39{±0.64}
&64.77{±1.00}\\
& TDGIA
&72.33{±0.65}
&65.67{±0.83}
&62.23{±0.78}\\
& AFGSM
&72.47{±0.71}
&67.25{±0.27}
&61.71{±1.16}\\
& $G^{2}A2C$
&72.90{±0.36}
&65.82{±0.50}
&62.46{±0.71}\\
& GANI
& 73.22{±0.86}
& 69.03{±0.89}
& 63.83{±0.55}\\
& WTGIA
&\underline{71.87{±1.17}}
&\textbf{64.17}{±0.84}
&\textbf{60.95}{±0.26}\\
& Ours
& \textbf{71.69}{±1.04}
& \underline{64.58{±0.69}}
& \underline{61.35{±1.33}}\\

\midrule
\multirow{11}{*}{\makecell{\textbf{ogbn-} \\ \textbf{arxiv}\\(75.44\\±1.09)}}
& Pref.
&74.86{±0.44}
&72.53{±0.37}
&69.27{±0.75}\\

& Best text*  & --------------- & {72.13{±0.65}} & --------------- \\

& PRBCD
&74.98{±0.49}
&73.38{±0.70}
&72.38{±0.79}\\
& Nettack
&73.96{±0.89}
&69.93{±0.33}
&\underline{66.76{±0.55}}\\
& PSO
&\underline{72.03{±0.46}}
&69.82{±1.14}
&67.87{±0.89}\\
& TDGIA
&73.63{±0.47}
&70.05{±0.32}
&66.97{±0.21}\\
& AFGSM
&73.09{±0.90}
&71.24{±0.20}
&68.82{±0.45}\\
& $G^{2}A2C$
&72.63{±0.34}
&\underline{69.50{±0.68}}
&66.67{±0.24}\\
& GANI
&73.61{±0.91}
&70.31{±0.42}
&68.17{±0.36}\\
& WTGIA
&72.38{±0.72}
&70.04{±0.76}
&67.61{±0.98}\\
& Ours
&\textbf{71.95}{±0.38}
&\textbf{68.23}{±0.85}
&\textbf{66.61}{±0.78}\\

\midrule
\multirow{11}{*}{\makecell{\textbf{ogbn-} \\ \textbf{products}\\(83.51\\±1.05)}}
& Pref.
&80.32{±1.46}
&77.92{±0.56}
&75.29{±0.95}\\

& Best text*  & ---------------& {75.45{±0.23}} & --------------- \\

& PRBCD
&81.09{±0.27}
&79.28{±0.29}
&77.85{±0.14}\\
& Nettack
&80.09{±0.87}
&76.55{±0.44}
&70.23{±0.29}\\
& PSO
&\textbf{78.91}{±0.95}
&75.16{±0.45}
&70.71{±0.40}\\
& TDGIA
&80.22{±0.95}
&75.45{±0.99}
&72.11{±0.93}\\
& AFGSM
&79.76{±0.68}
&75.05{±0.51}
&70.33{±0.94}\\
& $G^{2}A2C$
&79.05{±0.23}
&74.90{±0.17}
&71.61{±0.32}\\
& GANI
&\underline{78.93{±0.37}}
&\underline{74.83{±0.99}}
&\underline{69.51{±0.64}}\\
& WTGIA
&79.37{±0.54}
&75.55{±0.28}
&71.83{±0.21}\\
& Ours
&78.99{±0.63}
&\textbf{74.26}{±0.64}
&\textbf{69.05}{±0.51}\\ 

\bottomrule
\end{tabular}%
}
\caption{[Representation-level enhancer model] Classification accuracy after injection attacks. }
\label{tab:attack_results_ofa}
\end{table}

\textit{Observations.}
\method{} generally achieves more significant degradation in classification accuracy compared to baseline attacks, across datasets and injection budgets, demonstrating strong generalization and scalability. This is achieved through the multi-modal, multi-objective fitness evaluation and evolutionary operations. Jointly optimizing for prediction impact and structural centrality allows the attack to more effectively explore the search space, leading to more effective injections. Notably, even the strongest textual attacks, despite modifying all node features, still underperform compared to \method, highlighting the advantage of multi-modal attacks. 

\method{} directly queries the target model, avoiding reliance on surrogate model approximations. This allows it to better adapt to the decision boundaries of the target model, regardless of architectural differences (representation-level in Table~\ref{tab:attack_results_ofa} vs.\ feature-level enhancer in Table \ref{tab:attack_results_feat}). That makes it more effective in settings with a significant mismatch between the surrogate and target models, such as the representation-level enhancer.

\subsubsection{(\textbf{RQ2}) Runtime}
\label{sec:runtime_analysis}

\begin{figure}[!htb]
  \centering
  \includegraphics[width=1\linewidth]{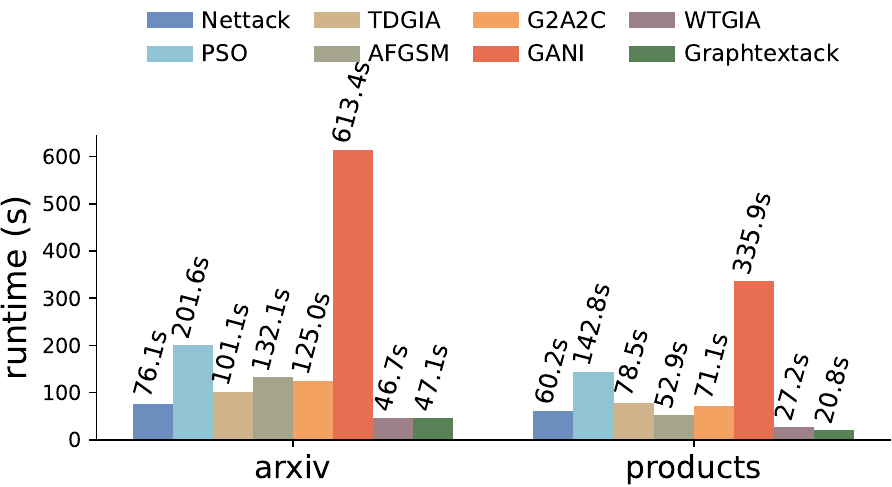}
  \caption{Runtime per injection on the representation-level enhancer target model.}
  \label{fig:runtime-large}
\end{figure}

In addition to attack effectiveness, computational efficiency is another important practical consideration for node injection attacks. Figure \ref{fig:runtime-large} compares the time required to generate each injection attack for the baseline methods and our proposed \method{} across the largest datasets. We provide results for all datasets in App.~\ref{app:results}.

\textit{Observations.} \method{} achieves the lowest average runtime, showing clear efficiency improvement over baselines. 
By avoiding gradient-based optimization and surrogate modeling, it eliminates major computational bottlenecks, enabling more scalable attacks.

\subsubsection{(\textbf{RQ3}) Ablation study}
\label{sec:ablation}

To verify the effectiveness of our approach, we analyze the influence of each main component of the optimization framework through an ablation study. We systematically remove crossover, mutation, and both components of the fitness function. For each ablation setting, we evaluate the post-attack classification accuracy under different injection budgets and compare it to the full model.

\begin{table}[t!]
    \centering
    {\fontsize{9pt}{10pt}\selectfont
    \begin{tabular}{lcc}
        \toprule
        \textbf{Variants} & \textbf{r=0.01} &  \textbf{r=0.05} \\ 
        \midrule
         \method 
         & 71.69{±1.04}  
         & 61.35{±1.33} \\
         Without crossover 
         & 73.50{±0.71}  
         & 64.42{±0.59} \\
         Without mutation 
         & 73.39{±0.85} 
         & 65.19{±0.41} \\
         Without pred. shift& 73.77{±0.67} 
         & 66.83{±0.54} \\
         Without PageRank& 71.92{±0.78}  
         & 61.97{±1.00} \\
        \bottomrule
    \end{tabular}
    }
    \caption{Ablation study for \method{} on WikiCS: Accuracy after injection on the representation-level enhancer model.}
    \label{tab:ablation}
\end{table}

\textit{Observations.}  Table \ref{tab:ablation} shows that removing any of the main mechanisms of the evolutionary optimization process noticeably reduces attack effectiveness. 
This highlights the importance of population diversity and search space exploration in the optimization process. We observe consistent trends across different datasets and budgets (App.~\ref{app:results}). 

\section{Conclusion}
We introduced \method{}, a novel black-box node injection attack against LLM-enhanced GNNs. Unlike existing approaches that rely on surrogate models, \method{} directly queries the target model during the optimization process, leading to a more effective and efficient strategy. 
We demonstrated that our method outperforms 12 baselines, including heuristics-based methods and state-of-the-art evolutionary optimization models. 
We also showed that utilizing direct model queries is more computationally efficient than surrogate model-based approaches. 
Our work provides a foundation for future work to explore defense mechanisms tailored for LLM-enhanced GNNs{, such as joint adversarial training and data filtering. }

\noindent \textbf{Limitations.} 
{\method{} assumes the attacker can inject nodes with controllable structure and text, which may not be feasible in some applications or domains.} 
Additionally, our current approach follows a simple class-conditioned feature sampling strategy, which, while effective, may be further improved by learning more expressive or task-adaptive feature generation strategies. 

\section*{Ethical Statement}
{Our work studies attacks against LLM-enhanced GNNs. All experiments were performed on public datasets and models. 
While our methods expose potential vulnerabilities, the goal is to highlight real risks in order to promote the development of more robust and secure models, as well as defense mechanisms.}

\section*{Acknowledgments}

 This material in this work is supported by the National Science Foundation under IIS~2504090, IIS 2212143 and CAREER Grant No.IIS 1845491. Any opinions, findings, and conclusions or recommendations expressed in this material are those of the author(s) and do not necessarily reflect the views of the National Science
 Foundation or other funding parties. PT completed this work prior to joining Amazon.

{\small
\bibliography{aaai2026}
}

\clearpage
\appendix
\section{Motivation for Multi-modal Attacks: Detailed Evaluation of Robustness of LLM-enhanced GNNs to Uni-modal Attacks}
\label{app:unimodal-results}

Recent studies have begun to explore adversarial vulnerabilities of LLM-enhanced GNNs under uni-modal attacks ~\cite{guo2024learning}. For completeness and alignment with our main experiments, we evaluate several representative \textit{uni-modal} attacks to assess how much performance degradation they can achieve. 

\noindent $\bullet$ \textbf{Textual perturbation attacks}: This type of attack introduces small, imperceptible modifications to the textual attributes associated with nodes, aiming to manipulate the LLM-generated features and thus influencing GNN output. We use the OpenAttack library \cite{Zeng20_openattack} to implement three perturbations at different granularity levels: BERT-ATTACK, HotFlip, and VIPER.

\noindent $\bullet$ \textbf{Structural perturbation attacks}: The other common approach is modifying the graph structure by randomly adding or removing edges. These perturbations can disrupt the neighborhood aggregation process in GNNs. A baseline involves randomly adding heterophilous edges or removing homophilous edges~\cite{Zhu22_HeterophilyRobustness}. We also evaluate Projected Randomized Block Coordinate Descent (PRBCD) \cite{geisler21_PRBCD}, a gradient-based structural attack that optimizes edge perturbations strategically to maximize classification performance drop.

We use the same node classification accuracy metric defined in Sec \ref{sec: Empirical_setup}. For each attack, we report post-attack accuracy under the poisoning setting, averaged over five runs.

\textbf{Results.} Table \ref{tab:naive_poisoning_results} summarizes the classification accuracy after applying various attacks.
While some methods degrade performance, the overall impact is limited. Textual perturbations over the entire training set result in only marginal accuracy drops. Structural attacks are more effective but typically require large-scale modifications (e.g., impacting 20-25\% of the edges).

This study highlights the fundamental limitations of these graph modification attacks. Moreover, textual perturbations require direct control over the attributes of existing data, and structural attacks require control over the connectivity of existing data, which is often unrealistic and easy to detect. They do not fully capture a realistic threat model, as attackers typically cannot directly control the existing data in real-world scenarios.

\setlength{\tabcolsep}{1mm}
\renewcommand{\arraystretch}{0.85}
\setlength{\extrarowheight}{-1pt}
\begin{table*}[!htb]
    \centering
    \scriptsize
    
    \begin{tabular}{ll llllllllll}

    \toprule
 & & \multicolumn{2}{c}{\textbf{Cora}}
 & \multicolumn{2}{c}{\textbf{PubMed}} 
 & \multicolumn{2}{c}{\textbf{WikiCS}} 
 & \multicolumn{2}{c}{\textbf{ogbn-arxiv}}
 & \multicolumn{2}{c}{\textbf{ogbn-products}}\\
    \cmidrule(lr){3-4} \cmidrule(lr){5-6} \cmidrule(lr){7-8} \cmidrule(lr){9-10} \cmidrule(lr){11-12}
    & \textbf{Attack} 
    & \makecell{\textbf{Repres.-} \\ \textbf{level} \\ \textbf{enhancer}}
    & \makecell{\textbf{Feature-} \\ \textbf{level} \\ \textbf{enhancer}}
    & \makecell{\textbf{Repres.-} \\ \textbf{level} \\ \textbf{enhancer}}
    & \makecell{\textbf{Feature-} \\ \textbf{level} \\ \textbf{enhancer}}  
    & \makecell{\textbf{Repres.-} \\ \textbf{level} \\ \textbf{enhancer}}
    & \makecell{\textbf{Feature-} \\ \textbf{level} \\ \textbf{enhancer}}
    & \makecell{\textbf{Repres.-} \\ \textbf{level} \\ \textbf{enhancer}}
    & \makecell{\textbf{Feature-} \\ \textbf{level} \\ \textbf{enhancer}}   
    & \makecell{\textbf{Repres.-} \\ \textbf{level} \\ \textbf{enhancer}}
    & \makecell{\textbf{Feature-} \\ \textbf{level} \\ \textbf{enhancer}}
    \\  
    \midrule
    & \makecell{No attack \\ (Clean)}
&80.95 ± {\scriptsize 0.87} 
&79.49 ± {\scriptsize 1.04}
&76.23 ± {\scriptsize 0.86} 
&74.38 ± {\scriptsize 1.64}
&76.31 ± {\scriptsize 0.94}
&76.36 ± {\scriptsize 0.82}
&75.44 ± {\scriptsize 1.09}
&69.53 ± {\scriptsize 0.36}
&83.51 ± {\scriptsize 1.05}
&82.86 ± {\scriptsize 0.50}
 \\
    \midrule
    \multirow{3}{*}{\makecell{\textbf{Textual}\\\textbf{Perturbations}}} & BertAttack 
&78.32 ± {\scriptsize 1.22}
&76.21 ± {\scriptsize 0.85}
&72.65 ± {\scriptsize 0.77}
&73.70 ± {\scriptsize 1.05}
&76.13 ± {\scriptsize 0.55}
&75.76 ± {\scriptsize 0.76}
&73.66 ± {\scriptsize 0.38}
&68.72 ± {\scriptsize 0.42}
&78.35 ± {\scriptsize 0.33}
&80.45 ± {\scriptsize 0.22}
\\
    & HotFlip 
&74.99 ± {\scriptsize 0.96}
&73.37 ± {\scriptsize 0.91}
&73.83 ± {\scriptsize 0.94}
&74.19 ± {\scriptsize 0.92}
&74.34 ± {\scriptsize 0.21}
&76.46 ± {\scriptsize 0.33}
&74.19 ± {\scriptsize 0.46}
&68.39 ± {\scriptsize 0.16}
&78.20 ± {\scriptsize 0.45}
&79.86 ± {\scriptsize 0.41}
\\
    & Viper 
&74.36 ± {\scriptsize 1.68} 
&75.64 ± {\scriptsize 0.71}
&71.78 ± {\scriptsize 0.69}
&73.06 ± {\scriptsize 0.70}
&75.88 ± {\scriptsize 0.46}
&74.79 ± {\scriptsize 0.44}
&72.13 ± {\scriptsize 0.65}
&67.33 ± {\scriptsize 0.75}
&75.45 ± {\scriptsize 0.23}
&79.44 ± {\scriptsize 0.36}

 \\
    \midrule 
    \multirow{15}{*}{\rotatebox[origin=c]{90}{\textbf{Structural Perturbations}}} 

\multirow{5}{*}{\makebox[2cm][c]{Remove Edges}}
    
    & 5\%
&79.62 ± {\scriptsize 0.52} 
&79.32 ± {\scriptsize 0.36}
&68.66 ± {\scriptsize 0.67}
&71.96 ± {\scriptsize 0.42}
&76.59 ± {\scriptsize 0.65}
&75.79 ± {\scriptsize 0.46}
&74.47 ± {\scriptsize 0.43}
&69.29 ± {\scriptsize 0.23}
&79.72 ± {\scriptsize 0.16}
&80.34 ± {\scriptsize 0.69}\\
    & 10\%
&77.97 ± {\scriptsize 0.35} 
&78.86 ± {\scriptsize 0.38}
&68.19 ± {\scriptsize 1.37}
&70.78 ± {\scriptsize 1.53}
&75.25 ± {\scriptsize 0.83}
&75.17 ± {\scriptsize 0.59}
&73.87 ± {\scriptsize 0.39}
&68.89 ± {\scriptsize 0.17}
&79.12 ± {\scriptsize 0.42}
&80.07 ± {\scriptsize 0.34}\\
    & 15\% 
&78.11 ± {\scriptsize 0.54} 
&77.94 ± {\scriptsize 0.90}
&67.29 ± {\scriptsize 0.62}
&70.31 ± {\scriptsize 1.04}
&74.41 ± {\scriptsize 0.33}
&74.16 ± {\scriptsize 0.41}
&73.04 ± {\scriptsize 0.58}
&68.58 ± {\scriptsize 0.35}
&77.76 ± {\scriptsize 0.59}
&79.56 ± {\scriptsize 0.14}\\
    & 20\%  
&78.73 ± {\scriptsize 0.54} 
&77.19 ± {\scriptsize 0.33}
&67.53 ± {\scriptsize 0.83}
&70.15 ± {\scriptsize 0.56}
&73.00 ± {\scriptsize 0.17}
&74.41 ± {\scriptsize 0.58}
&72.85 ± {\scriptsize 0.55}
&67.97 ± {\scriptsize 0.31}
&76.39 ± {\scriptsize 0.33}
&78.98 ± {\scriptsize 0.10}\\
    & 25\%  
&78.18 ± {\scriptsize 0.41} 
&76.38 ± {\scriptsize 0.60}
&66.82 ± {\scriptsize 0.72}
&68.18 ± {\scriptsize 0.54}
&73.88 ± {\scriptsize 0.38}
&74.09 ± {\scriptsize 0.17}
&72.40 ± {\scriptsize 0.66}
&67.46 ± {\scriptsize 0.29}
&75.40 ± {\scriptsize 0.54}
&78.62 ± {\scriptsize 0.43}\\ 

\cmidrule(lr){2-12}
\multirow{5}{*}{\makebox[2cm][c]{Add Edges}}

    & 5\% 
&77.90 ± {\scriptsize 0.30} 
&78.96 ± {\scriptsize 0.92}
&72.85 ± {\scriptsize 0.32}
&73.99 ± {\scriptsize 0.72}
&77.63 ± {\scriptsize 0.71}
&76.92 ± {\scriptsize 0.50}
&74.93 ± {\scriptsize 0.62}
&67.94 ± {\scriptsize 0.37}
&79.81 ± {\scriptsize 0.27}
&79.21 ± {\scriptsize 0.38}\\
    & 10\%  
&76.34 ± {\scriptsize 0.65} 
&75.64 ± {\scriptsize 0.68}
&72.71 ± {\scriptsize 0.54}
&73.06 ± {\scriptsize 0.39}
&76.92 ± {\scriptsize 0.59}
&77.50 ± {\scriptsize 0.28}
&73.38 ± {\scriptsize 0.56}
&67.48 ± {\scriptsize 0.81}
&78.98 ± {\scriptsize 0.18}
&78.85 ± {\scriptsize 0.47}\\
    & 15\%  
&75.18 ± {\scriptsize 0.28} 
&74.78 ± {\scriptsize 0.85}
&72.88 ± {\scriptsize 0.78}
&73.58 ± {\scriptsize 0.89}
&75.80 ± {\scriptsize 0.37}
&75.95 ± {\scriptsize 0.63}
&73.26 ± {\scriptsize 0.32}
&67.08 ± {\scriptsize 0.21}
&78.28 ± {\scriptsize 0.56}
&78.40 ± {\scriptsize 0.51}\\
    & 20\%  
&73.66 ± {\scriptsize 0.44} 
&73.95 ± {\scriptsize 0.40}
&72.17 ± {\scriptsize 0.47}
&73.34 ± {\scriptsize 1.14}
&74.37 ± {\scriptsize 0.29}
&75.13 ± {\scriptsize 0.35}
&71.68 ± {\scriptsize 0.99}
&66.77 ± {\scriptsize 0.32}
&77.69 ± {\scriptsize 0.49}
&77.64 ± {\scriptsize 0.32}\\
    & 25\%  
&72.85 ± {\scriptsize 0.44} 
&73.18 ± {\scriptsize 0.39}
&70.89 ± {\scriptsize 1.20}
&70.24 ± {\scriptsize 0.32}
&71.66 ± {\scriptsize 0.17}
&72.21 ± {\scriptsize 0.78}
&71.56 ± {\scriptsize 0.41}
&66.55 ± {\scriptsize 0.28}
&76.02 ± {\scriptsize 0.74}
&76.71 ± {\scriptsize 0.30}\\ 

\cmidrule(lr){2-12}
\multirow{5}{*}{\makebox[2cm][c]{PRBCD}}

    & 1\% 
&79.33 ± {\scriptsize 2.08}
&78.55 ± {\scriptsize 0.89}
&71.13 ± {\scriptsize 0.18}
&70.74 ± {\scriptsize 0.47}
&72.96 ± {\scriptsize 1.04}
&73.79 ± {\scriptsize 0.83}
&74.98 ± {\scriptsize 0.49}
&68.29 ± {\scriptsize 0.26}
&81.09 ± {\scriptsize 0.27}
&80.17 ± {\scriptsize 0.18}\\
    & 2\% 
&78.68 ± {\scriptsize 1.25} 
&76.68 ± {\scriptsize 0.23}
&69.34 ± {\scriptsize 0.56}
&68.91 ± {\scriptsize 0.86}
&71.75 ± {\scriptsize 0.82}
&71.42 ± {\scriptsize 0.91}
&73.81 ± {\scriptsize 0.38}
&67.92 ± {\scriptsize 0.22}
&80.02 ± {\scriptsize 0.39}
&78.82 ± {\scriptsize 0.46}\\
    & 3\% 
&77.37 ± {\scriptsize 1.03} 
&75.22 ± {\scriptsize 0.38}
&68.60 ± {\scriptsize 0.47}
&68.04 ± {\scriptsize 0.81}
&70.13 ± {\scriptsize 0.52}
&69.37 ± {\scriptsize 0.46}
&73.38 ± {\scriptsize 0.70}
&67.54 ± {\scriptsize 0.16}
&79.28 ± {\scriptsize 0.29}
&78.02 ± {\scriptsize 0.35}\\
    & 4\% 
&75.61 ± {\scriptsize 1.46} 
&74.52 ± {\scriptsize 0.16}
&66.72 ± {\scriptsize 0.68}
&66.58 ± {\scriptsize 0.55}
&69.92 ± {\scriptsize 0.25}
&69.21 ± {\scriptsize 0.79}
&72.66 ± {\scriptsize 0.92}
&66.90 ± {\scriptsize 0.63}
&78.20 ± {\scriptsize 0.33}
&77.44 ± {\scriptsize 0.64}\\
    & 5\% 
&74.75 ± {\scriptsize 0.54} 
&73.62 ± {\scriptsize 0.36}
&65.04 ± {\scriptsize 0.90}
&65.22 ± {\scriptsize 0.68}
&67.38 ± {\scriptsize 0.51}
&67.92 ± {\scriptsize 0.42}
&72.38 ± {\scriptsize 0.79}
&66.04 ± {\scriptsize 0.27}
&77.85 ± {\scriptsize 0.14}
&75.46 ± {\scriptsize 0.27}\\

    \bottomrule
    \end{tabular}
    \caption{Classification accuracy after existing uni-modal poisoning attacks: textual and structural perturbations. Lower accuracy indicates stronger attack effectiveness.}
    \label{tab:naive_poisoning_results}
\end{table*}

\section{Additional Theoretical Analysis \& Proofs}
\label{app:proofs}

\subsection{Formal Motivation for Multi-Modal Node Injection}
\label{sec:theory-motivation}

Here, we take a global perspective and aim to characterize the multi-modal adversarial synergy. In App.~\ref{app:local}, we take a local perspective and characterize how local prediction shifts arise when an injected node perturbs both its feature-space alignment and its structural neighborhood.  Let $\Delta_{\text{acc}}(a)$ denote the degradation in model accuracy caused by attack $a$.
Let $\mathcal{S}$ denote the set of structure-only attacks, $\mathcal{F}$ the set of feature-only attacks, and $\mathcal{J}$ the set of joint structure-and-feature attacks. 

\begin{lemma}[Multi-Modal Adversarial Synergy]
\label{lemma:multimodal_synergy}
Suppose the target model $f$ is locally Lipschitz-continuous with respect to both structure and features, with constants $L_E, L_X > 0$, and exhibits a non-trivial interaction between structure and feature perturbations modeled by a cross-term coefficient $\gamma > 0$. 
Then for small perturbations
$\Delta_{\text{acc}}(a) \approx L_E \cdot d_E(a) + L_X \cdot d_X(a) + \gamma \cdot d_E(a) d_X(a),$ 
where $d_E(a)$ and $d_X(a)$ are the magnitudes of structure and feature perturbations respectively. 
Moreover:
\begin{equation}
\max_{a \in \mathcal{J}} \Delta_{\text{acc}}(a) > \max\left\{ \max_{s \in \mathcal{S}} \Delta_{\text{acc}}(s),\; \max_{f \in \mathcal{F}} \Delta_{\text{acc}}(f) \right\}.
\end{equation}
\end{lemma}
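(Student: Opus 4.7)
My plan is to split the lemma into two parts: (i) establishing the additive-plus-bilinear approximation for $\Delta_{\text{acc}}$, and (ii) deducing strict dominance of $\mathcal{J}$ over $\mathcal{S}$ and $\mathcal{F}$. For (i), I would work with a smooth surrogate of accuracy — e.g.\ the mean confidence margin $\mathcal{L}(f; E, X) = \mathbb{E}_v[m_v(E, X)]$ where $m_v$ is the predicted-class margin — and appeal to a standard result that for small perturbations the accuracy drop is controlled by the margin drop up to lower-order terms. Writing the clean graph as $(E, X)$ and an attack $a$ as a pair of perturbations $(\delta_E, \delta_X)$ with $d_E(a)=\|\delta_E\|$, $d_X(a)=\|\delta_X\|$, the Lipschitz hypothesis supplies constants so that $|\mathcal{L}(f; E+\delta_E, X) - \mathcal{L}(f; E, X)| \le L_E \, d_E(a)$ and the analogous inequality in the feature coordinate.

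For the cross-term I would perform a second-order Taylor expansion of $\mathcal{L}$ around $(E, X)$, treating $E$ via a smooth relaxation (e.g.\ an adjacency-weighting parametrization so that edge perturbations are differentiable). The pure second-order terms in $d_E^2$ and $d_X^2$ are absorbed into the $L_E d_E$ and $L_X d_X$ terms at the order we care about, but the mixed partial $\partial_E \partial_X \mathcal{L}$ produces an irreducible bilinear contribution. I would \emph{define} (or equivalently, assume an upper bound on) $\gamma$ as a positive lower bound on this mixed-partial contribution projected onto the adversarially chosen directions; intuitively, message passing of the form $h_v = \sigma(\sum_{u \sim v} W x_u)$ guarantees that simultaneously perturbing an edge $(v,u)$ and the feature $x_u$ affects $h_v$ in a way that neither perturbation alone can, which is exactly the source of $\gamma > 0$. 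Combining the first-order Lipschitz bounds with this lower bound on the cross term yields the stated approximation $\Delta_{\text{acc}}(a) \approx L_E d_E(a) + L_X d_X(a) + \gamma d_E(a) d_X(a)$.

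For (ii), I would use the trivial embedding $\mathcal{S} \cup \mathcal{F} \subseteq \mathcal{J}$ (any uni-modal attack is a joint attack with one coordinate set to $0$), which immediately gives $\ge$. For strictness, let $s^\star \in \arg\max_{s \in \mathcal{S}} \Delta_{\text{acc}}(s)$ with $d_E(s^\star) = D_E > 0$ and $d_X(s^\star) = 0$, so $\Delta_{\text{acc}}(s^\star) \approx L_E D_E$. Construct $a'\in\mathcal{J}$ by keeping the same structural perturbation and adding an arbitrarily small feature perturbation $\delta_X$ of norm $\epsilon > 0$ aligned with the direction that maximizes the bilinear form; the approximation gives $\Delta_{\text{acc}}(a') \approx L_E D_E + L_X \epsilon + \gamma D_E \epsilon > \Delta_{\text{acc}}(s^\star)$ since $L_X, \gamma, D_E, \epsilon > 0$. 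The symmetric construction starting from a feature-only maximizer handles the other inequality, completing the strict dominance.

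The main obstacle I anticipate is making the interaction coefficient $\gamma$ precise and genuinely positive, rather than merely a formal symbol in a Taylor expansion. The cleanest way is to adopt a model-level assumption — e.g.\ a uniform lower bound on $\|\partial_E \partial_X \mathcal{L}\|$ along the attack manifold, or a structural nondegeneracy condition on the GNN layer (nonzero weight matrix $W$ and nonvanishing activation derivative at the aggregated input) — which is satisfied by generic LLM-enhanced GNNs. A secondary subtlety is that accuracy is piecewise constant; I would handle this by working with the margin surrogate and invoking a standard smoothing argument that translates a margin drop of size $\Omega(\gamma D_E \epsilon)$ into a strict accuracy drop on a nontrivial fraction of nodes near the decision boundary, which is where the $\approx$ in the statement is absorbing lower-order terms.
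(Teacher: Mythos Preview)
Your proposal is correct and follows essentially the same line as the paper: both obtain the additive-plus-bilinear form from the Lipschitz/interaction hypotheses and then argue strict dominance by observing that a uni-modal maximizer has one coordinate equal to zero and therefore forfeits the positive cross-term $\gamma\, d_E\, d_X$, so augmenting it with a small perturbation in the other modality yields a strictly larger $\Delta_{\text{acc}}$. Your treatment is considerably more careful than the paper's short sketch---you supply a Taylor-expansion justification for the bilinear term, a margin surrogate to handle the piecewise-constant nature of accuracy, and an explicit $\epsilon$-construction for strictness---but the underlying argument is the same.
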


\begin{corollary}[Efficiency]
Under the setting of Lemma~\ref{lemma:multimodal_synergy}, multi-modal perturbations achieve greater degradation than uni-modal perturbations under any fixed total perturbation budget.
\end{corollary}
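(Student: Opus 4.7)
The plan is to reduce the corollary to a one-dimensional constrained optimization built directly on top of the approximation from Lemma~\ref{lemma:multimodal_synergy}. First I would fix a total perturbation budget $B>0$ and restrict attention to attacks with $d_E(a)+d_X(a)=B$. Parameterizing any such attack by a mixing coefficient $\lambda\in[0,1]$ via $d_E=(1-\lambda)B$ and $d_X=\lambda B$, the lemma's approximation collapses to the scalar objective
\[
\Phi(\lambda)\;=\;\bigl[L_E+(L_X-L_E)\lambda\bigr]B\;+\;\gamma\,\lambda(1-\lambda)\,B^{2}.
\]
Uni-modal attacks in $\mathcal{S}\cup\mathcal{F}$ correspond to the endpoints $\lambda\in\{0,1\}$, at which the cross term vanishes and $\max_{a\in\mathcal{S}\cup\mathcal{F}}\Delta_{\text{acc}}(a)=\max(L_E,L_X)\,B$, while joint attacks in $\mathcal{J}$ occupy the interior, where the quadratic term $\gamma\lambda(1-\lambda)B^{2}$ is strictly positive.

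Next I would differentiate $\Phi$ to obtain the interior maximizer $\lambda^{\star}=\tfrac12-\tfrac{L_E-L_X}{2\gamma B}$ (taking $L_E\ge L_X$ without loss of generality), check that it lies in $(0,1)$ under the non-trivial-interaction regime $\gamma B>L_E-L_X$ already guaranteed by the lemma, and substitute to obtain a closed-form expression for the excess degradation
\[
\Phi(\lambda^{\star})-\max(L_E,L_X)\,B\;=\;\frac{\bigl(\gamma B-(L_E-L_X)\bigr)^{2}}{4\gamma},
\]
which is strictly positive in that regime. Combined with $\max_{a\in\mathcal{J}}\Delta_{\text{acc}}(a)\ge\Phi(\lambda^{\star})$, this delivers the claimed strict improvement of joint attacks over both uni-modal families under the same total budget.

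The main obstacle will be the degenerate regime $\gamma B\le |L_E-L_X|$, in which the unconstrained critical point is pushed outside $[0,1]$ and the constrained optimum collapses back to an endpoint, so that joint attacks are no better than the stronger uni-modal one. I would dispose of this by leveraging the qualitative hypothesis of Lemma~\ref{lemma:multimodal_synergy} that the cross-term coefficient $\gamma$ is non-trivial, which is precisely what rules out this degeneracy; it is also consistent with the small-perturbation regime in which $L_E$ and $L_X$ are of comparable magnitude, so that $\gamma B$ dominates their difference for any realistic budget. A secondary technical point is verifying that the Lipschitz approximation of the lemma remains valid at the optimal split $\lambda^{\star}$; this holds because $\lambda^{\star}$ keeps both $d_E$ and $d_X$ of order $B$, well within the local regime under which the approximation was derived.
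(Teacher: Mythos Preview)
The paper does not give a separate proof of this corollary; it is stated immediately after Lemma~\ref{lemma:multimodal_synergy} as a direct consequence, and the lemma's own proof is a two-line heuristic observing that the cross-term $\gamma\,d_E d_X$ is positive for joint attacks and vanishes for uni-modal ones. In particular, the paper never carries out the budget-constrained optimization you set up. Your argument is therefore substantially more detailed than anything in the paper: the parameterization by $\lambda$, the explicit interior maximizer $\lambda^{\star}$, and the closed-form excess $(\gamma B-|L_E-L_X|)^2/(4\gamma)$ are all your own additions, and they make quantitatively precise what the paper only gestures at.

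One point worth flagging: your disposal of the degenerate regime $\gamma B\le|L_E-L_X|$ appeals to the lemma's hypothesis that the interaction is ``non-trivial,'' but the lemma as stated assumes only $\gamma>0$, which does not guarantee $\gamma B>|L_E-L_X|$ for \emph{every} budget $B$. Read literally, the corollary (``under any fixed total perturbation budget'') fails for sufficiently small $B$ whenever $L_E\neq L_X$, since the constrained optimum then collapses to the stronger uni-modal endpoint. What your argument actually establishes is the sharper and correct statement that joint attacks strictly dominate once $B>|L_E-L_X|/\gamma$. This is a gap in the paper's statement rather than in your reasoning, and your explicit identification of the threshold is an improvement over the paper's treatment, which does not notice the issue at all.
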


\subsection{Proof for the formal motivation}

\begin{proof}[Proof for Lemma~\ref{lemma:multimodal_synergy}]
By local Lipschitz continuity, for small perturbations $a$ affecting structure and features jointly, we can approximate the change in model outputs by:

\begin{equation}
\begin{split}
\| f(E', X') - f(E, X) \| \lesssim\,
    L_E \cdot d_E(a) + L_X \cdot d_X(a) \\
    + \gamma \cdot d_E(a) d_X(a)
\end{split}
\end{equation}

where 
$L_E \cdot d_E(a)$ models the first-order sensitivity to structural changes,
$L_X \cdot d_X(a)$ models the first-order sensitivity to feature changes, and
$\gamma \cdot d_E(a) d_X(a)$ captures second-order cross-modal amplification.

Structure-only attacks ($s \in \mathcal{S}$) incur a shift $L_E \cdot d_E(s)$.
Feature-only attacks ($f \in \mathcal{F}$) incur a shift $L_X \cdot d_X(f)$. 
However, joint structure-and-feature attacks ($a \in \mathcal{J}$) benefit from the positive cross-term $\gamma \cdot d_E(s) d_X(f) > 0$, resulting in strictly larger prediction shifts.

Thus, multi-modal perturbations achieve strictly greater degradation in model accuracy than either structure-only or feature-only perturbations alone.
\end{proof}

\subsection{Approximate Local Prediction Shift Bound}
\label{app:local}

We characterize how local prediction shifts arise when an injected node perturbs both its feature-space alignment and its structural neighborhood.

\begin{lemma}[Approximate Shift Bound]
\label{lemma:local}
Let $\mathcal{N}_2(v')$ denote the two-hop neighborhood of an injected node $v'$ in the perturbed graph $G'$.
Assume that the model $f$ computes node embeddings via neighborhood aggregation (e.g., mean aggregation) and that $f$ is locally Lipschitz-continuous\footnote{This Lipschitz continuity bound holds for standard message-passing GNNs with activation functions such as ReLU and SoftMax, as shown in \cite{Aladi18-Lipschitz}. } with respect to feature and structural perturbations.
Then, for any node $v \in \mathcal{N}_2(v')$, there exists a constant $L > 0$ such that:
\begin{equation}
\Delta_{\text{conf}}(v) \leq L \cdot \left( \| X(v') - \bar{X}_{N(v')} \| + \frac{|N(v')|}{\deg(v) + 1} \right)
\end{equation}
where $\bar{X}_{N(v')}$ denotes the mean feature vector of $v'$'s neighbors and $\deg(v)$ is the degree of node $v$.
\end{lemma}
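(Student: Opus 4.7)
The plan is to bound $\Delta_{\text{conf}}(v) = |C_v - C_v'|$ by first bounding the change in $v$'s aggregated embedding under the injection of $v'$, then transferring that bound to the output confidence via the local Lipschitz assumption on $f$. The key observation is that mean aggregation cleanly separates the injection's impact into (i) a \emph{feature-alignment} component, measuring how different $X(v')$ is from what the neighborhood already contains, and (ii) a \emph{structural-rescaling} component, measuring how strongly the injected connection dilutes $v$'s existing aggregation.

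First, I would write out the mean-aggregated representation explicitly. For a direct neighbor $u \in N(v') \cap N(v)$, its new embedding is
\[
h_u' = \frac{1}{\deg(u) + 1}\Bigl(X(v') + \sum_{w \in N(u)} X(w)\Bigr),
\]
and subtracting the clean embedding $h_u$ produces a difference that, after algebraic rearrangement, can be expressed as a term proportional to $X(v') - \bar X_{N(v')}$ plus a lower-order normalizer correction of order $1/(\deg(u)+1)$. Propagating this perturbation through one more aggregation layer to reach $v$ introduces the multiplicity factor $|N(v')|/(\deg(v)+1)$, since only the fraction of $v$'s neighbors that lie in $N(v')$ are perturbed, weighted by $v$'s own normalizer. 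Summing the feature-alignment and structural contributions gives a bound
\[
\|h_v - h_v'\| \;\leq\; C_1\,\|X(v') - \bar X_{N(v')}\| \;+\; C_2\,\frac{|N(v')|}{\deg(v)+1},
\]
for constants $C_1, C_2$ depending only on the aggregation scheme.

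Next, I would apply the local Lipschitz continuity of $f$ (including the softmax head) to obtain $|C_v - C_v'| \leq L_f \|h_v - h_v'\|$, and then absorb $C_1, C_2, L_f$ into the single constant $L$ claimed in the statement. This reduces the lemma to the embedding bound above.

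The main obstacle I expect is justifying the clean two-term decomposition: expanding the perturbed mean aggregator exactly yields cross terms of the form (feature misalignment)$\times$(degree rescaling), and these must be shown to be dominated by one of the two leading terms so they can be absorbed into the Lipschitz constant rather than appearing explicitly. I would handle this by invoking a small-perturbation regime (a single injected node with bounded degree $|N(v')|$) together with a boundedness assumption on raw features $\|X(\cdot)\| \leq B$, which lets me apply a first-order Taylor expansion of the normalizer $1/(\deg+1+\delta)$ around $\delta = 0$ and fold the higher-order residuals into $L$. A secondary subtlety is that the lemma references $\mathcal{N}_2(v')$ but the bound involves $\deg(v)$ directly; handling one-hop and strict two-hop nodes in a unified way will require noting that one-hop neighbors are the base case of the same recursion (where the intermediate aggregation is trivial), so the same inequality applies with at most a change in the absorbed constants.
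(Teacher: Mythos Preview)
Your proposal is correct and follows essentially the same route as the paper: bound the change in the mean-aggregated embedding $\|h_v'-h_v\|$ by splitting it into a feature-misalignment term $\|X(v')-\bar X_{N(v')}\|$ and a degree-normalization term $|N(v')|/(\deg(v)+1)$, then apply the Lipschitz property of $f$ to pass to $\Delta_{\text{conf}}(v)$. If anything, your treatment is slightly more careful than the paper's, which analyzes only the one-hop case explicitly and asserts the structural term without derivation; your explicit two-layer propagation and your remark on absorbing cross terms and unifying one-/two-hop nodes into the same constant are refinements the paper leaves implicit.
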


\begin{proof}[Proof]

Let $G = (V, E, T, Y)$ be a text-attributed graph (TAG), where $\mathbf{X} \in \mathbb{R}^{|V| \times d}$ denotes node feature embeddings derived from textual attributes via an encoding function $\phi : \mathcal{T} \to \mathbb{R}^d$, i.e., $\mathbf{x}_v = \phi(T(v))$ for each $v \in V$. Let $\mathbf{A} \in \{0,1\}^{|V| \times |V|}$ denote the adjacency matrix of $G$, where $\mathbf{A}_{uv} = 1$ if $(u,v) \in E$ and $0$ otherwise.

Suppose a GNN model $f$ computes node embeddings $\mathbf{h}_v$ via message passing of the form:
\begin{equation}
\mathbf{h}_v = \text{AGG}\left( \{ \mathbf{x}_u : u \in \mathcal{N}(v) \} \right),
\end{equation}
where $\mathcal{N}(v)$ denotes the 1-hop neighbors of node $v$ and $\text{AGG}$ is a permutation-invariant aggregation function (e.g., mean aggregation). 
The model outputs SoftMax class probabilities $\mathbf{p}_v = \text{SoftMax}(\mathbf{W}_{\text{out}} \mathbf{h}_v)$, where $\mathbf{W}_{\text{out}}$ is the classifier weight matrix. 

Let $v' \notin V$ be an injected adversarial node connected to a subset $N(v') \subseteq V$ of existing nodes. Let $G'$ denote the graph after injection, with updated adjacency matrix $\mathbf{A}'$ and feature matrix $\mathbf{X}'$ (with $\mathbf{x}_{v'}$ appended).

We define the local confidence shift for a node $v \in \mathcal{N}_2(v')$ as:
\begin{equation}
\Delta_{\text{conf}}(v) := \left| \max\left( f(G')_v \right) - \max\left( f(G)_v \right) \right|.
\end{equation}

We now bound $\Delta_{\text{conf}}(v)$:  
Since $f$ is composed of aggregation followed by a classifier, and assuming that both the aggregation function $\text{AGG}$ and the classifier are Lipschitz continuous with constant $L > 0$, we have:
\begin{equation}
\Delta_{\text{conf}}(v) \leq L \cdot \| \mathbf{h}_v' - \mathbf{h}_v \|,
\end{equation}
where $\mathbf{h}_v'$ is the embedding of node $v$ after the injection (in $G'$).

The change $\| \mathbf{h}_v' - \mathbf{h}_v \|$ can be decomposed into two sources:

\begin{enumerate}
    \item \textbf{Feature Perturbation:} Adding $v'$ introduces a feature $\mathbf{x}_{v'}$ that may be different from the mean of $v$'s existing neighbors' features.
    
    \item \textbf{Structural Perturbation:} Adding $v'$ modifies the degree of $v$ from $\deg(v)$ to $\deg(v) + 1$, thus it alters the aggregation normalization.
\end{enumerate}
\paragraph{Step 1: Impact of Feature Perturbation.}
In a GNN with mean aggregation, the new aggregated feature for $v$ after injection becomes:
\begin{equation}
\bar{\mathbf{x}}_{\mathcal{N}'(v)} = \frac{1}{\deg(v)+1} \left( \sum_{u \in \mathcal{N}(v)} \mathbf{x}_u + \mathbf{x}_{v'} \right).
\end{equation}
Thus, the perturbation to the aggregated feature is:
\begin{equation}
\bar{\mathbf{x}}_{\mathcal{N}'(v)} - \bar{\mathbf{x}}_{\mathcal{N}(v)} = \frac{1}{\deg(v)+1} \left( \mathbf{x}_{v'} - \bar{\mathbf{x}}_{\mathcal{N}(v)} \right).
\end{equation}
Taking norms:
\begin{equation}
\| \bar{\mathbf{x}}_{\mathcal{N}'(v)} - \bar{\mathbf{x}}_{\mathcal{N}(v)} \| \leq \frac{1}{\deg(v)+1} \| \mathbf{x}_{v'} - \bar{\mathbf{x}}_{\mathcal{N}(v)} \|.
\label{eq:feat-bound}
\end{equation}

\paragraph{Step 2: Impact of Structural Perturbation (Degree Normalization).}
Adding $v'$ also perturbs the degree normalization of neighbors. 
Specifically, the relative effect on aggregation for a neighbor $v$ is proportional to the number of newly added connections $|N(v')|$, scaled by the new degree $\deg(v) + 1$. 
Thus, the additional perturbation magnitude due to structural reweighting is bounded by:
\begin{equation}
\frac{|N(v')|}{\deg(v)+1}.
\end{equation}
This term captures the increase in influence from the newly added neighbor(s).

\paragraph{Step 3: Combining Both Effects.}
By Lipschitz continuity of the aggregation and classification functions with constant $L > 0$, the total change in confidence satisfies:
\begin{equation}
\Delta_{\text{conf}}(v) \leq L \cdot \left( \| \bar{\mathbf{x}}_{\mathcal{N}'(v)} - \bar{\mathbf{x}}_{\mathcal{N}(v)} \| + \frac{|N(v')|}{\deg(v)+1} \right).
\end{equation}
Substituting the bound from Step 1 (Eq.~\eqref{eq:feat-bound}), we then obtain:
\begin{equation}
\Delta_{\text{conf}}(v) \leq L \cdot \frac{1}{\deg(v)+1} \left( \| \mathbf{x}_{v'} - \bar{\mathbf{x}}_{\mathcal{N}(v)} \| + |N(v')| \right).
\end{equation}
\qedhere
\end{proof}

\noindent \textbf{Extension to Degree-Normalized GNNs.}
While the above analysis assumes mean aggregation for clarity, similar reasoning applies to degree-normalized GNNs such as GCNs~\cite{kipf2016semi}. 
In GCNs, neighbor features are weighted by $\frac{1}{\sqrt{\deg(v)\deg(u)}}$ during aggregation, or else the node representations are updated via neighborhood aggregation:
\begin{equation}
\mathbf{h}_v = \sigma\left( \sum_{u \in \mathcal{N}(v)} \frac{1}{\sqrt{\deg(v) \deg(u)}} \mathbf{W} \mathbf{x}_u \right),
\end{equation}
where
$\mathcal{N}(v)$ denotes the set of neighbors of $v$,
$\mathbf{W} \in \mathbb{R}^{d' \times d}$ is a learnable weight matrix, and 
$\sigma$ is a non-linear activation function (e.g., ReLU).

Now, suppose a new node $v'$ is injected, with feature vector $\mathbf{x}_{v'}$ and edges $E'$ connecting it to a set of existing nodes $N(v')$.
The perturbed graph is $G'$, with feature matrix $\mathbf{X}'$ and adjacency matrix $\mathbf{A}'$. 
The new aggregation for a neighbor $v$ of $v'$ becomes:
\begin{equation}
\mathbf{h}_v' = \sigma\left( \sum_{u \in \mathcal{N}'(v)} \frac{1}{\sqrt{\deg'(v) \deg'(u)}} \mathbf{W} \mathbf{x}_u' \right),
\end{equation}
where $\mathcal{N}'(v)$ and $\deg'(v)$ reflect the updated graph $G'$. 
Although the exact perturbation expressions involve more complex degree-dependent terms, the overall structure of the bound remains similar: 
the confidence shift $\Delta_{\text{conf}}(v)$ still relies on the injected node's feature discrepancy $\| \mathbf{x}_{v'} - \bar{\mathbf{x}}_{\mathcal{N}(v)} \|$ and the relative change in degree normalization.
Thus, the key insight that both structural and semantic perturbations jointly influence local prediction shifts holds more generally across common GNN architectures.

\noindent \textbf{Intuition.} The result of the lemma formally shows that local prediction shifts caused by an injected node are bounded by two key factors: (1) the feature discrepancy between the injected node and its neighbors, and (2) the structural degree impact from new connections.
Importantly, it quantifies how multi-modal perturbations, which modify both textual features and connectivity, jointly influence downstream predictions, justifying the need for multi-objective attack optimization as used in \method.

\subsection{Search Space: Proof}
Recall that in our threat model (Sec.~\ref{sec:method}), the attacker injects $r$ new nodes 
\(V'\) into the original graph \(G= (V, E, T, Y)\),
selecting: (1)~Edge connections $E' \subseteq V' \times V$ from injected to existing nodes, and (2)~Features $T': V' \to \mathcal{T}$. 
The goal is to maximize the model degradation $\Delta_{\text{acc}}(G')$ under black-box access, where $G' = (V \cup V', E \cup E', T \cup T', Y \cup Y')$.

\textbf{Structure-Only Search Space.}
For structure-only node injection attacks, each injected node $v'_i$ may select up to $d_{\max}$ connections to existing nodes $V$.
The number of possible structural connection patterns is:
\(|\mathcal{A}_{\text{structure}}| = \prod_{i=1}^{r} \binom{|V|}{d_i}\), where \(d_i\) is the degree assigned to injected node \(v'_i\). 
Assuming a uniform budget \(d_i = d_{\max}\) for all \(i\), this simplifies to: \(|\mathcal{A}_{\text{structure}}| = \binom{|V|}{d_{\max}}^r\). 
Thus, the structure-only search space grows approximately as \(O(|V|^{r \cdot d_{\max}})\) when \(d_{\max} \ll |V|\).

\paragraph{Feature-Only Search Space.}
For feature-only node injection attacks, assume the feature generation process selects from a discrete candidate set \(\mathcal{F}\) (e.g., class labels, template embeddings).
The number of possible feature assignments is:
\(|\mathcal{A}_{\text{feature}}| = |\mathcal{F}|^r\). 

\paragraph{Multi-Modal Search Space.}
For joint structure-and-feature optimization, the attack space is the Cartesian product:
\(|\mathcal{A}_{\text{multi}}| = |\mathcal{A}_{\text{structure}}| \times |\mathcal{A}_{\text{feature}}| = \binom{|V|}{d_{\max}}^r \times |\mathcal{F}|^r\).
Thus, the overall search space grows as:
\(|\mathcal{A}_{\text{multi}}| = O\left(|V|^{r \cdot d_{\max}} \times |\mathcal{F}|^r\right)\).

\paragraph{Hardness Implication.}
Even for moderate graph sizes and small injection budgets, exact optimization over \(\mathcal{A}_{\text{multi}}\) is computationally infeasible.
Multi-modal attacks involve simultaneously navigating two discrete, high-dimensional combinatorial spaces (structure and semantics) resulting in a search space that is exponentially larger than that of structure-only or feature-only attacks.

\begin{proof}[Proof Sketch for Lemma~\ref{lem:complexity}]
At each generation, \method{} evaluates $N_p$ candidate injection strategies.
Each evaluation requires $Q_{\text{eval}}$ model queries to compute the fitness objectives:
(i)~querying predictions for roughly $O(r \cdot d_{\max}^2)$ nodes in the two-hop neighborhood, and (ii)~computing PageRank scores, which can be approximated in $O(|E|)$ time.
The evolutionary operations (selection, crossover, mutation) require at most $O(N_p)$ operations per generation.
Thus, the total evaluation cost per generation is $O(N_p \cdot (r \cdot d_{\max}^2 + |E|))$. Repeating over $T_{\text{gen}}$ generations gives the total cost. 
\end{proof}

\section{Additional Empirical Analysis}
\label{app:results}
\subsection{Dataset Statistics}
Table \ref{tab:dataset_stats} reports the statistics for the five text-attributed graph datasets used in our experiments, which we described in Sec. \ref{sec: Empirical_setup}.

\setlength{\tabcolsep}{1mm}
\renewcommand{\arraystretch}{0.85}
\setlength{\extrarowheight}{-1pt}
\begin{table}[!htb]
    \centering
    \scriptsize
    \resizebox{0.5\textwidth}{!}{%
    \begin{tabular}{l r r r }
        \hline
        \textbf{Dataset} & \textbf{Node} & \textbf{Edge} & \textbf{Class}  \\
        \hline
        Cora & 2,708 & 5,429 & 7  \\
        PubMed& 19,717 & 44,335& 3  \\
        WikiCS & 11,701 & 215,863&10  \\
        ogbn-products (subset)& 54,025& 74,420&47 \\
        ogbn-arxiv& 169,343& 1,166,243& 40 \\
 \hline
    \end{tabular}
    }
    \caption{Dataset Statistics}
    \label{tab:dataset_stats}
\end{table}

\subsection{Baseline Details}
\label{app:baselines}

We listed our 8 baselines in Sec.~\ref{sec: Empirical_setup}. Here, we provide more details for each baseline, including implementation details and hyperparameter tuning.

\begin{itemize}
    \item Preferential attack~\cite{barabasi1999emergence}: A heuristics-based approach where the probability of each injected node being connected to any of the existing nodes is proportional to the degree of the node. 
    \item Nettack~\cite{zugner2018adversarial}: 
    A popular targeted modification attack. To adapt it for our attack setting, we sample from existing nodes to create injected nodes, and initialize their connections using preferential attachment. During each injection step, randomly sampled test nodes are selected as the attack target, and Nettack is applied to modify the edges between the target and the injected nodes. Our implementation uses a 2-layer GCN as the surrogate model. 
    
    \item Particle Swarm Optimization (PSO)~\cite{Zang_2020_pso}: A population-based node injection attack where each particle represents a candidate set of edges for the injected nodes. The particles are iteratively updated based on their own best-known positions and the global best solution. Fitness is defined by the post-attack classification accuracy using a 2-layer GCN surrogate, targeting global degradation. The algorithm searches for edge configurations that maximize the overall attack effect. 
    
    \item TDGIA~\cite{zou2021_tdgia}: A global node injection attack that selects adversarial neighbors using a topological defective edge strategy. Since the original method assumes continuous features, we adapt it by assigning feature vectors sampled from existing nodes to ensure consistency.
    
    \item AFGSM~\cite{wang2020scalable_AFGSM}: A gradient-based, targeted attack. We extend it to global poisoning by selecting random test nodes as the target for each injected node and applying AFGSM accordingly. 
    
    \item \(G^{2}\)A2C~\cite{ju2023let_g2a2c}: A reinforcement learning-based, targeted evasion attack that models the process as a Markov decision problem. We adapt it to the global poisoning setting by applying the policy sequentially across injected nodes.
    
    \item  GANI~\cite{fang2024gani}: A global node injection attack that uses a genetic algorithm to select neighbors for each injected node, optimizing for classification loss via a surrogate SGC model and decrease in node homophily. 
    
    \item Word-frequency-based Text-level Graph Injection Attack (WTGIA) \cite{lei2024intruding_WTGIA}: A text-level node injection attack that optimizes sparse binary Bag-of-Words embeddings, then uses an LLM (Llama-2-7b-chat-hf) to generate text containing the selected words. A multi-round correction step ensures word coverage. 
\end{itemize}

\textbf{Implementation Details \& Hyperparameter Tuning.}   
For all target models and directly applicable baselines methods, we use the official implementations and adopt the recommended hyperparameter settings reported in their respective papers, without further tuning. Where necessary, we verify correctness by reproducing baseline performance on each dataset. To ensure fairness, for all attacks, we employ the same method of sampling the edge budgets from the original dataset's distribution.

In our framework, the population size and maximal iteration are set to 30 and 50, respectively. The crossover and mutation probabilities are set to 0.5 and 0.25, respectively. 
We use a range of injection budget, 
corresponding to \{1\%, 2\%, 3\%, 4\%, 5\%\} of the original graph size. The degree of each injected node is sampled from the existing degree distribution.

All experiments were conducted on a machine with an Intel Xeon Gold 6226R CPU, an NVIDIA A40 GPU, and 128GB RAM.

\subsection{Results on Feature-level Enhancer Target Model}

In the main paper, we provided detailed results for a representation-level enhancer target model. Here, we report node classification accuracy after injection attacks on the \textit{feature}-level enhancer target model. This model encodes node features using a pretrained LLM and passes them into a standard GCN, as described in the Empirical Setup (Sec.~\ref{sec: Empirical_setup}). 

\begin{table*}[!htbp]
\centering
\footnotesize
\begin{tabular}{llccccccc}
\toprule
\textbf{Dataset} & \textbf{Methods} & \multicolumn{7}{c}{\textbf{Classification Accuracy}} \\ \cmidrule{3-9}
& & \textbf{Clean} & \textbf{r=0.01} & \textbf{r=0.02} & \textbf{r=0.03} & \textbf{r=0.04} & \textbf{r=0.05} & \textbf{Avg. Rank}\\
\midrule
\multirow{11}{*}{\textbf{Cora}}& Preferential & \multirow{11}{*}{79.98{\scriptsize ±0.72}}
&79.66{\scriptsize ±0.49}
&79.27{\scriptsize ±0.40}
&78.84{\scriptsize ±0.71}
&77.99{\scriptsize ±0.58}
&76.97{\scriptsize ±0.97}
& 11
\\
& Best textual* &  & ----------------- & ----------------- & 73.37{\scriptsize ±0.91} & ----------------- & ----------------- & 7 \\
& PRBCD& 
&78.55{\scriptsize ±0.89}
&76.68{\scriptsize ±0.23}
&75.22{\scriptsize ±0.38}
&74.52{\scriptsize ±0.16}
&73.62{\scriptsize ±0.36}
& 9.8
\\
& Nettack& 
&75.58{\scriptsize ±0.73}
&72.77{\scriptsize ±0.45}
&68.25{\scriptsize ±0.62}
&64.06{\scriptsize ±0.58}
&58.23{\scriptsize ±0.79}
& 5.4\\
& PSO& 
&75.15{\scriptsize ±0.18}
&72.79{\scriptsize ±0.52}
&68.28{\scriptsize ±0.47}
&63.51{\scriptsize ±0.52}
&56.82{\scriptsize ±0.26}
&5\\
& TDGIA& 
&75.54{\scriptsize ±0.29}
&73.49{\scriptsize ±0.78}
&67.49{\scriptsize ±0.68}
&62.51{\scriptsize ±0.26}
&58.59{\scriptsize ±0.67}
&5.4\\
& AFGSM& 
&78.37{\scriptsize ±0.95}
&76.22{\scriptsize ±0.13}
&73.71{\scriptsize ±0.65}
&68.21{\scriptsize ±0.23}
&64.67{\scriptsize ±0.85}
&8.4\\
& $G^{2}A2C$& 
&78.77{\scriptsize ±0.52}
&74.36{\scriptsize ±0.36}
&69.84{\scriptsize ±0.62}
&65.89{\scriptsize ±0.24}
&60.01{\scriptsize ±0.96}
&7.8\\
& GANI& 
&\textbf{72.13}{\scriptsize ±0.53}
&70.63{\scriptsize ±0.82}
&65.66{\scriptsize ±0.69}
&\underline{60.64{\scriptsize ±0.86}}
&56.35{\scriptsize ±0.92}
&2.4\\
& WTGIA& 
&74.57{\scriptsize ±0.83}
&\textbf{69.18}{\scriptsize ±0.22}
&\underline{65.47{\scriptsize ±0.29}}
&60.91{\scriptsize ±0.35}
&\textbf{55.11}{\scriptsize ±0.21}
&2.2\\
& \method  & 
&\underline{72.68{\scriptsize ±0.84}}
&\underline{70.05{\scriptsize ±0.57}}
&\textbf{64.23}{\scriptsize ±0.78}
&\textbf{57.93}{\scriptsize ±0.50}
&\underline{55.64{\scriptsize ±1.08}}
&\textbf{1.6}\\
\midrule
\multirow{11}{*}{\textbf{PubMed}}
& Preferential & \multirow{11}{*}{80.03{\scriptsize ±0.88}}
&77.95{\scriptsize ±0.52}
&75.03{\scriptsize ±0.40}
&74.06{\scriptsize ±0.36}
&72.94{\scriptsize ±0.34}
&68.91{\scriptsize ±0.58}
&10.8\\

& Best textual* &  & ----------------- & -----------------& {73.06{\scriptsize ±0.70}} & ----------------- & ----------------- & 8.4\\

& PRBCD& 
&\textbf{70.74}{\scriptsize ±0.47}
&68.91{\scriptsize ±0.86}
&68.04{\scriptsize ±0.81}
&66.58{\scriptsize ±0.55}
&65.22{\scriptsize ±0.68}
&5.4\\

& Nettack& 
&74.15{\scriptsize ±0.68}
&71.38{\scriptsize ±0.36}
&68.21{\scriptsize ±1.07}
&64.83{\scriptsize ±0.79}
&62.30{\scriptsize ±0.45}
&6.2\\
& PSO& 
&73.28{\scriptsize ±0.12}
&69.81{\scriptsize ±0.63}
&\textbf{66.47}{\scriptsize ±0.50}
&64.64{\scriptsize ±0.53}
&\underline{59.33{\scriptsize ±0.15}}
&3.2\\
& TDGIA& 
&75.57{\scriptsize ±0.77}
&69.61{\scriptsize ±0.62}
&67.99{\scriptsize ±0.51}
&\underline{64.49{\scriptsize ±0.85}}
&61.12{\scriptsize ±0.24}
&4.6\\
& AFGSM& 
&76.02{\scriptsize ±0.75}
&74.48{\scriptsize ±0.48}
&68.58{\scriptsize ±0.66}
&65.82{\scriptsize ±0.68}
&63.00{\scriptsize ±0.43}
&8.4\\
& $G^{2}A2C$& 
&73.61{\scriptsize ±0.18}
&69.74{\scriptsize ±0.95}
&67.72{\scriptsize ±0.83}
&65.98{\scriptsize ±0.37}
&60.56{\scriptsize ±0.40}
&5\\
& GANI& 
&75.57{\scriptsize ±0.91}
&73.10{\scriptsize ±0.55}
&72.58{\scriptsize ±0.90}
&70.28{\scriptsize ±0.66}
&62.85{\scriptsize ±0.61}
&8.4\\
& WTGIA& 
&\underline{72.30{\scriptsize ±0.16}}
&\textbf{68.23}{\scriptsize ±0.87}
&\underline{66.82{\scriptsize ±0.64}}
&65.39{\scriptsize ±0.22}
&61.66{\scriptsize ±0.76}
&3\\
& \method & 
&73.49{\scriptsize ±0.43}
&\underline{68.65{\scriptsize ±0.28}}
&67.62{\scriptsize ±0.57}
&\textbf{64.17}{\scriptsize ±0.37}
&\textbf{58.71}{\scriptsize ±0.56}
&\textbf{2.4}\\
\midrule
\multirow{11}{*}{\textbf{WikiCS}}
& Preferential & \multirow{11}{*}{76.44{\scriptsize ±0.53}}
&75.33{\scriptsize ±0.80}
&74.21{\scriptsize ±0.59}
&72.85{\scriptsize ±0.95}
&72.27{\scriptsize ±0.46}
&71.28{\scriptsize ±0.47}
&11\\

& Best textual* &  & ----------------- & ----------------- & {74.79{\scriptsize ±0.44}} & ----------------- & -----------------&9\\

& PRBCD& 
&73.79{\scriptsize ±0.83}
&71.42{\scriptsize ±0.91}
&69.37{\scriptsize ±0.46}
&69.21{\scriptsize ±0.79}
&67.92{\scriptsize ±0.42}
&9.6\\
& Nettack& 
&74.61{\scriptsize ±1.14}
&73.83{\scriptsize ±0.20}
&73.39{\scriptsize ±0.24}
&72.53{\scriptsize ±0.34}
&71.17{\scriptsize ±0.54}
&2\\
& PSO& 
&73.92{\scriptsize ±0.38}
&71.67{\scriptsize ±0.25}
&71.33{\scriptsize ±0.80}
&69.16{\scriptsize ±0.56}
&67.35{\scriptsize ±0.68}
&4.2\\
& TDGIA& 
&74.17{\scriptsize ±0.63}
&73.08{\scriptsize ±0.36}
&72.39{\scriptsize ±0.28}
&69.86{\scriptsize ±0.77}
&68.22{\scriptsize ±0.45}
&2.6\\
& AFGSM& 
&73.05{\scriptsize ±0.50}
&71.64{\scriptsize ±0.52}
&70.92{\scriptsize ±0.31}
&69.72{\scriptsize ±0.43}
&68.03{\scriptsize ±0.88}
&6\\
& $G^{2}A2C$& 
&74.28{\scriptsize ±0.51}
&72.47{\scriptsize ±0.66}
&71.47{\scriptsize ±0.69}
&70.31{\scriptsize ±0.42}
&68.39{\scriptsize ±0.40}
&7\\
& GANI& 
&73.36{\scriptsize ±0.17}
&72.17{\scriptsize ±0.36}
&71.31{\scriptsize ±0.73}
&70.77{\scriptsize ±0.81}
&70.06{\scriptsize ±0.28}
&4.8\\
& WTGIA& 
&\underline{72.81{\scriptsize ±0.18}}
&\underline{71.33{\scriptsize ±0.37}}
&\textbf{68.37}{\scriptsize ±0.51}
&\underline{67.41{\scriptsize ±0.78}}
&\underline{65.04{\scriptsize ±0.54}}
&8.4\\
& \method & 
&\textbf{72.06}{\scriptsize ±0.32}
&\textbf{70.68}{\scriptsize ±0.49}
&\underline{68.72{\scriptsize ±0.58}}
&\textbf{66.42}{\scriptsize ±0.35}
&\textbf{64.89}{\scriptsize ±0.63}
&\textbf{1.4}\\

\midrule
\multirow{11}{*}{\textbf{ogbn-arxiv}}
& Preferential & \multirow{11}{*}{69.53{\scriptsize ±0.36}}
&69.50{\scriptsize ±0.45}
&69.37{\scriptsize ±0.22}
&68.82{\scriptsize ±0.51}
&68.38{\scriptsize ±0.25}
&67.97{\scriptsize ±0.29}
&11\\

& Best textual* &  & ----------------- & ----------------- & {67.33{\scriptsize ±0.75}} & ----------------- & -----------------&9\\

& PRBCD& 
&68.29{\scriptsize ±0.26}
&67.92{\scriptsize ±0.22}
&67.54{\scriptsize ±0.16}
&66.90{\scriptsize ±0.63}
&66.04{\scriptsize ±0.27}
&9.6\\

& Nettack& 
&\textbf{62.57}{\scriptsize ±0.15}
&\underline{60.83{\scriptsize ±0.34}}
&\underline{59.46{\scriptsize ±0.30}}
&\underline{57.38{\scriptsize ±0.51}}
&56.83{\scriptsize ±0.69}
&2\\
& PSO& 
&64.56{\scriptsize ±0.34}
&63.60{\scriptsize ±0.26}
&62.56{\scriptsize ±0.10}
&61.96{\scriptsize ±0.09}
&61.85{\scriptsize ±0.18}
&4.2\\
& TDGIA& 
&\underline{62.84{\scriptsize ±0.16}}
&60.84{\scriptsize ±0.41}
&59.48{\scriptsize ±0.35}
&57.93{\scriptsize ±0.24}
&\underline{56.65{\scriptsize ±0.22}}
&2.6\\
& AFGSM& 
&64.78{\scriptsize ±0.39}
&63.96{\scriptsize ±0.53}
&63.81{\scriptsize ±0.78}
&63.11{\scriptsize ±0.40}
&62.69{\scriptsize ±0.17}
&6.2\\
& $G^{2}A2C$& 
&65.54{\scriptsize ±0.47}
&64.89{\scriptsize ±0.14}
&64.44{\scriptsize ±0.34}
&63.70{\scriptsize ±0.38}
&63.32{\scriptsize ±0.16}
&6.4\\
& GANI& 
&64.28{\scriptsize ±0.76}
&63.92{\scriptsize ±0.40}
&63.42{\scriptsize ±0.37}
&62.49{\scriptsize ±0.43}
&61.92{\scriptsize ±0.11}
&5.2\\
& WTGIA& 
&68.21{\scriptsize ±0.51}
&67.61{\scriptsize ±0.39}
&66.96{\scriptsize ±0.25}
&66.11{\scriptsize ±0.33}
&65.60{\scriptsize ±0.52}
&8.4\\
& \method & 
&63.31{\scriptsize ±0.29}
&\textbf{59.97}{\scriptsize ±0.74}
&\textbf{58.31}{\scriptsize ±0.68}
&\textbf{57.35}{\scriptsize ±0.52}
&\textbf{56.46}{\scriptsize ±0.37}
&\textbf{1.4}\\

\midrule
\multirow{11}{*}{\textbf{ogbn-products}}
& Preferential & \multirow{11}{*}{82.86{\scriptsize ±0.50}}
&79.21{\scriptsize ±0.24}
&77.86{\scriptsize ±0.14}
&76.89{\scriptsize ±0.68}
&75.91{\scriptsize ±0.24}
&74.60{\scriptsize ±0.73}
&7.6\\

& Best textual* &  & ----------------- & -----------------& {79.44{\scriptsize ±0.36}} & ----------------- & -----------------&10.6\\

& PRBCD& 
&80.17{\scriptsize ±0.18}
&78.82{\scriptsize ±0.46}
&78.02{\scriptsize ±0.35}
&77.44{\scriptsize ±0.64}
&75.46{\scriptsize ±0.27}
&10.2\\

& Nettack& 
&79.42{\scriptsize ±0.38}
&78.52{\scriptsize ±0.12}
&76.77{\scriptsize ±0.34}
&75.63{\scriptsize ±0.23}
&74.35{\scriptsize ±0.55}
&7.6\\
& PSO& 
&79.05{\scriptsize ±0.22}
&77.39{\scriptsize ±0.27}
&76.04{\scriptsize ±0.21}
&74.92{\scriptsize ±0.41}
&74.12{\scriptsize ±0.94}
&4\\
& TDGIA& 
&79.46{\scriptsize ±0.17}
&78.67{\scriptsize ±0.29}
&76.20{\scriptsize ±0.44}
&75.54{\scriptsize ±0.56}
&74.41{\scriptsize ±0.23}
&7.8\\
& AFGSM& 
&79.16{\scriptsize ±0.36}
&78.52{\scriptsize ±0.22}
&76.75{\scriptsize ±0.18}
&75.24{\scriptsize ±0.29}
&74.13{\scriptsize ±0.51}
&6.2\\
& $G^{2}A2C$& 
&79.31{\scriptsize ±0.28}
&78.35{\scriptsize ±0.43}
&76.54{\scriptsize ±0.66}
&73.79{\scriptsize ±0.81}
&73.54{\scriptsize ±0.19}
&5.2\\
& GANI& 
&\textbf{78.99}{\scriptsize ±0.21}
&\underline{77.25{\scriptsize ±0.73}}
&76.16{\scriptsize ±0.35}
&\textbf{73.47}{\scriptsize ±0.16}
&72.90{\scriptsize ±0.30}
&2.2\\
& WTGIA& 
&79.06{\scriptsize ±0.24}
&77.28{\scriptsize ±0.60}
&\underline{75.52{\scriptsize ±0.84}}
&\underline{73.59{\scriptsize ±0.78}}
&\textbf{72.68}{\scriptsize ±0.62}
&2.4\\
& \method & 
&\underline{79.01{\scriptsize ±0.14}}
&\textbf{76.77}{\scriptsize ±0.59}
&\textbf{75.33}{\scriptsize ±0.81}
&73.86{\scriptsize ±0.75}
&\underline{72.85{\scriptsize ±0.19}}
&\textbf{2}\\

\bottomrule
\end{tabular}%
\caption{[Feature-level enhancer model] Classification accuracy after injection attacks. Best performance (lowest accuracy) bolded, second best underlined. \method{} maintains superior attack performance across datasets and budgets. }
\label{tab:attack_results_feat}
\end{table*}

Table~\ref{tab:attack_results_feat} presents post-attack accuracy under varying injection budgets, datasets, and attack methods. The last column summarizes the average rank of each attack across the 5 budgets $r$, where lower number rank corresponds to a more effective attack.
The results are consistent with our findings in Sec. \ref{sec:overall_performance}, \method{} achieves the best attack performance on average. 

\subsection{Additional Runtime Results} 

Figure~\ref{fig:runtime} shows the runtime results across all datasets. For each method, we report the average wall-clock time per node injection. The observations are consistent with Sec. \ref{sec:runtime_analysis}, \method{} achieves the lowest average runtime across datasets.

\begin{figure}[!htbp]
  \centering
  \includegraphics[width=0.95\linewidth]{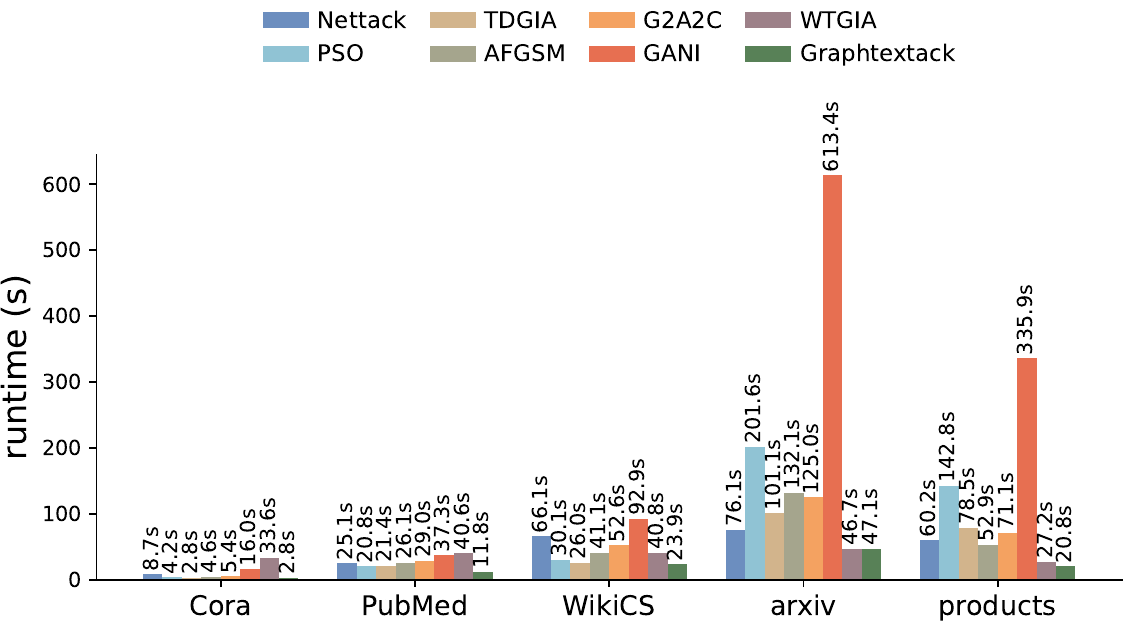}
  \caption{[Extended results] Comparison of runtime to generate an injection on the representation-level enhancer target model.}
  \label{fig:runtime}
\end{figure}

\subsection{Additional Ablation Study Results} 
\label{app:ablation}

In table \ref{tab:ablation_all}, we provide full ablation results for \method{} across three datasets (Cora, WikiCS, and ogbn-arxiv) under varying injection budgets. These results are consistent with our findings reported in Sec. \ref{sec:ablation}. Each variant disables a key component of the framework: crossover, mutation, or one of the fitness terms. Removing any component consistently reduces attack effectiveness, highlighting the importance of both population-level diversity and multi-modal fitness objectives.

\begin{table}[!htbp]
    \centering
    \scriptsize
    \begin{tabular}{lccccc}
        \toprule
        \multicolumn{6}{c}{\textbf{Cora}} \\
        \midrule
        \textbf{Variants} & \textbf{r=0.01} & \textbf{r=0.02} & \textbf{r=0.03} & \textbf{r=0.04} & \textbf{r=0.05} \\ 
        \midrule
        \method 
        & \textbf{73.99}{\scriptsize ±0.78} 
        & \underline{71.14{\scriptsize ±0.68}} 
        & \textbf{65.75}{\scriptsize ±0.81} 
        & \underline{64.96{\scriptsize ±0.63}} 
        & \textbf{62.02}{\scriptsize ±0.64} \\
        Without crossover 
        & 75.11{\scriptsize ±0.17} 
        & 72.79{\scriptsize ±0.44} 
        & 67.82{\scriptsize ±0.36} 
        & 65.49{\scriptsize ±0.29} 
        & 63.35{\scriptsize ±0.62} \\
        Without mutation & 76.29{\scriptsize ±0.67} 
        & 73.19{\scriptsize ±0.30} 
        & 68.47{\scriptsize ±0.47} 
        & 65.80{\scriptsize ±0.22} 
        & 63.25{\scriptsize ±0.24} \\
        Without pred. shift 
        & 76.60{\scriptsize ±0.54} 
        & 71.95{\scriptsize ±0.20}
        & 71.56{\scriptsize ±0.52} 
        & 66.55{\scriptsize ±0.26} 
        & 63.49{\scriptsize ±0.76} \\
        Without PageRank 
        & \underline{74.71{\scriptsize ±0.72}} 
        & \textbf{70.94}{\scriptsize ±0.53} 
        & \underline{68.90{\scriptsize ±0.64}} 
        & \textbf{64.70}{\scriptsize ±0.49} 
        & \underline{64.01{\scriptsize ±0.50}} \\
        
        \midrule
        \multicolumn{6}{c}{\textbf{WikiCS}} \\
        \midrule
        \textbf{Variants} & \textbf{r=0.01} & \textbf{r=0.02} & \textbf{r=0.03} & \textbf{r=0.04} & \textbf{r=0.05} \\ 
        \midrule
        \method 
        & \textbf{71.69}{\scriptsize ±1.04} 
        & \textbf{67.77}{\scriptsize ±1.27} 
        & \textbf{64.58}{\scriptsize ±0.69} 
        & \textbf{63.56}{\scriptsize ±1.36} 
        & \textbf{61.35}{\scriptsize ±1.33} \\
        Without crossover
        &73.50{\scriptsize ±0.71} 
        &71.19{\scriptsize ±0.14} 
        &67.60{\scriptsize ±0.42} 
        &66.01{\scriptsize ±0.97} 
        &64.42{\scriptsize ±0.59} \\
        Without mutation 
        & 73.39{\scriptsize ±0.85} 
        & 70.57{\scriptsize ±0.22} 
        & 68.89{\scriptsize ±0.31} 
        & 66.88{\scriptsize ±0.39} 
        & 65.19{\scriptsize ±0.41} \\
        Without pred. shift 
        & 73.77{\scriptsize ±0.67} 
        & 72.65{\scriptsize ±0.35} 
        & 71.14{\scriptsize ±0.34} 
        & 70.85{\scriptsize ±0.61} 
        & 66.83{\scriptsize ±0.54} \\
        Without PageRank 
        & \underline{71.92{\scriptsize ±0.78}} 
        & \underline{68.84{\scriptsize ±0.76}} 
        & \underline{67.53{\scriptsize ±0.33}} 
        & \underline{64.44{\scriptsize ±0.38}} 
        & \underline{61.97{\scriptsize ±1.00}} \\

        \midrule
        \multicolumn{6}{c}{\textbf{ogbn-arxiv}} \\
        \midrule
        \textbf{Variants} & \textbf{r=0.01} & \textbf{r=0.02} & \textbf{r=0.03} & \textbf{r=0.04} & \textbf{r=0.05} \\ 
        \midrule
        \method 
        & \textbf{71.95}{\scriptsize ±0.38} 
        & \underline{70.86{\scriptsize ±0.89}} 
        & \textbf{68.23}{\scriptsize ±0.85} 
        & \textbf{67.95}{\scriptsize ±0.95} 
        & \textbf{66.61}{\scriptsize ±0.78} \\
        Without crossover 
        & 73.43{\scriptsize ±0.52} 
        & 72.69{\scriptsize ±0.33} 
        & 69.47{\scriptsize ±0.64} 
        & 68.17{\scriptsize ±0.30} 
        & 67.85{\scriptsize ±0.43} \\
        Without mutation 
        & 72.32{\scriptsize ±0.26} 
        & 72.54{\scriptsize ±0.24} 
        & 71.19{\scriptsize ±0.35} 
        & 70.85{\scriptsize ±0.16} 
        & 69.75{\scriptsize ±0.57} \\
        Without pred. shift 
        & 74.52{\scriptsize ±0.23} 
        & 74.26{\scriptsize ±0.45} 
        & 72.17{\scriptsize ±0.10} 
        & 71.39{\scriptsize ±0.72} 
        & 70.89{\scriptsize ±0.63} \\
        Without PageRank 
        & \underline{71.97{\scriptsize ±0.47}} 
        & \textbf{70.55}{\scriptsize ±0.19} 
        & \underline{68.94{\scriptsize ±0.12}}
        & \underline{68.41{\scriptsize ±0.18}} 
        & \underline{67.04{\scriptsize ±0.36}} \\

        \bottomrule
    \end{tabular}
    \caption{Ablation study for \method{} on Cora, WikiCS, and ogbn-arxiv: Accuracy after injection under different model configurations on the representation-level enhancer model.}
    \label{tab:ablation_all}
\end{table}

\end{document}